\newcommand{\mysize}{\normalsize}
\newcommand{\bfP}{\mathbf{P}}
\newcommand{\neigh}[1]{\emph{neigh}(#1)}
\newcommand{\E}{\mathbb{E}}
\newcommand{\prob}{\text{Pr}}
\newcommand{\poly}{\operatorname{poly}}
\newcommand{\step}{\textit{step}}
\newcommand{\probe}{\textit{probe}}
\newcommand{\tvd}[2]{\| {#1 - #2} \|_\textsc{TV}}
\newcommand{\sumest}{SumApprox}
\newcommand{\taupiest}{MassApprox}
\newcommand{\taunest}{FullMassApprox}
\newcommand{\kerr}{k_{\epsilon,\delta}}
\newtheorem{theorem}{Theorem}
\newtheorem{lemma}[theorem]{Lemma}
\newtheorem{corollary}{Corollary}
\title{\textbf{On approximating the stationary distribution of time-reversible Markov chains}\footnote{This is the full version of the STACS 2018 paper~\cite{Bressan&2018}. Last update: 2017/12/30.}}
\author{
{\mysize Marco Bressan}\\
{\mysize Dip.\ Informatica}\\
{\mysize Sapienza Universit\`a di Roma}\\
{\mysize Roma, Italy}\\
{\mysize bressan@di.uniroma1.it}
\and
{\mysize Enoch Peserico}\\
{\mysize Dip.\ Ing.\ Informazione}\\
{\mysize Universit\`a di Padova}\\
{\mysize Padova, Italy}\\
{\mysize enoch@dei.unipd.it}
\and
{\mysize Luca Pretto}\\
{\mysize Dip.\ Ing.\ Informazione}\\
{\mysize Universit\`a di Padova}\\
{\mysize Padova, Italy}\\
{\mysize pretto@dei.unipd.it}
}
\date{}
\begin{document}
\maketitle

\begin{abstract}
Approximating the stationary probability of a state in a Markov chain through Markov chain Monte Carlo techniques is, in general, inefficient. 
Standard random walk approaches require $\tilde{O}(\tau/\pi(v))$ operations to approximate the probability $\pi(v)$ of a state $v$ in a chain with mixing time $\tau$, and even the best available techniques still have complexity $\tilde{O}(\tau^{1.5}/\pi(v)^{0.5})$; and since these complexities depend inversely on $\pi(v)$, they can grow beyond any bound in the size of the chain or in its mixing time.
In this paper we show that, for time-reversible Markov chains, there exists a simple randomized approximation algorithm that breaks this ``small-$\pi(v)$ barrier''.
\\[.01\textheight]
\textbf{Keywords:}
Markov chains, MCMC sampling, large graph algorithms, randomized algorithms, sublinear algorithms
\end{abstract}

\section{Introduction}
\label{sec:intro}
We investigate the problem of approximating efficiently a \textit{single} entry of the stationary distribution of an ergodic Markov chain.
This problem has two main motivations.
First, with the advent of massive-scale data, even complexities linear in the size of the input are often excessive~\cite{Rubinfeld&2011}; therefore computing explicitly the entire stationary distribution, e.g.\ via the power method~\cite{golub2012matrix}, can be simply infeasible.
As an alternative one can then resort to approximating only individual entries of the vector, in exchange for a much lower computational complexity~\cite{Lee&2014,Shyamkumar&}.
In fact, if such a complexity is low enough one could efficiently ``sketch'' the whole vector by quickly getting a fair estimate of its entries.
Second, in many practical cases one is really interested in just a few entries at a time.
A classic example is that of network centralities, many of which are stationary distributions of an ergodic Markov chain~\cite{Bonacich&2001}.
Indeed, the problem of approximating the Personalized PageRank score of a few nodes in a graph has been repeatedly addressed in the past~\cite{Borgs&2012,Borgs&2014,Lofgren&2014b,Lofgren&2015}.

In this paper we seek for efficient algorithms for approximating the stationary probability $\pi(v)$ of some target state $v$ in the state space of a discrete-time ergodic Markov chain.
Besides the motivations above, the problem arises in estimating heat kernels and graph diffusions, testing the conductance of graphs and chains, developing local algorithms, and has applications in machine learning; see~\cite{Lofgren&2015b,Lee&2013} for a thorough discussion.
We adopt a simple model where with a single operation one can either (i) simulate one step of the chain or (ii) retrieve the transition probability between a pair of states.
Although recent research has provided encouraging results, existing algorithms suffer from a crucial bottleneck: to guarantee a small \emph{relative} error in the approximation of $\pi(v)$, they incur a cost that grows with $1/\pi(v)$ itself (basically because estimating $\pi(v)$ via repeated sampling requires $1/\pi(v)$ samples).
This is a crucial issue since in general there is no lower bound on $\pi(v)$; even worse, if the state space has $n$ states, then most states have mass $\pi(v) = O(\frac{1}{n})$, and one can easily design chains where they have mass exponentially small in $n$.
In general, then, the cost of existing algorithms can blow up far beyond $O(n)$ for almost all input states $v$.
It is thus natural to ask if the dependence of the complexity on $\pi(v)$ is unavoidable.
Unfortunately, one can easily show that $\Omega(\tau/\pi(v))$ operations can be necessary to estimate $\pi(v)$ within any constant multiplicative factor if one makes no assumption on the chain (see Appendix~\ref{sub:general_impossible}).
To drop below this complexity barrier one must then necessarily look at special classes of Markov chains.

We present an algorithm that breaks this ``small-$\pi(v)$ barrier'' for \emph{time-reversible} Markov chains. Time-reversible chains are a well-known subclass of Markov chains which lie at the heart of the celebrated Metropolis-Hastings algorithm~\cite{Hastings1970} and are equivalent to random walks on weighted undirected graphs~\cite{Levin&2006}.
Formally, given any $\epsilon,\delta>0$ and any state $v$ in a time-reversible chain, our algorithm with probability $1-\delta$ returns a multiplicative $(1\pm\epsilon)$-approximation of $\pi(v)$ by using $\tilde{O}(\tau \|\pi\|^{-1} )$ operations, where $\tau$ is the mixing time of the chain, $\|\cdot\|$ is the Euclidean norm and $\tilde{O}(\cdot)$ hides polynomials in $\epsilon^{-1}$, $\ln{\!(\delta^{-1})}$, $\ln{\!(\|\pi\|^{-1})}$.
The complexity is independent of $\pi(v)$, and for all but a vanishing fraction of states in the chain improves by factors at least $\sqrt{n}$ or $\sqrt{\tau}$ over previous algorithms.
The heart of our algorithm is a randomized scheme for approximating the sum of a nonnegative vector by sampling its entries with probability proportional to their values.
This scheme requires $\tilde{O}(\|\pi\|^{-1})$ samples if $\pi$ is the distribution over the vector entries, which generalizes the $O(\sqrt{n})$ algorithm of~\cite{Motwani&2007} and is provably optimal.
We prove that our algorithm for estimating $\pi(v)$ is essentially optimal as a function of $\tau$, $n$ and $\|\pi\|$; in fact one cannot do better even under a stronger computational model where all transition probabilities to/from all visited states are known.
Finally, we show the number of distinct states visited by our algorithm may be further reduced, provided such a number satisfies some concentration hypotheses.
This is useful if visiting a new state is expensive (e.g.\ if states are users in a social network).
All our algorithms are simple to implement, require no tuning, and an experimental evaluation shows that in practice they are faster than existing alternatives already for medium-sized chains (see Appendix~\ref{sub:exp}).

The rest of the paper is organized as follows.
Subsection~\ref{sub:notation} pins down definitions and notation; Subsection~\ref{sub:problem} formalizes the problem; Subsection~\ref{sub:rel} discusses related work; Subsection~\ref{sub:results} summarizes our results.
Section~\ref{sec:sumapprox} presents our vector sum approximation algorithm.
Section~\ref{sec:massapprox} presents our approximation algorithm for $\pi(v)$.
Section~\ref{sec:lb} provides the proofs of optimality.
All missing details are found in the Appendix.

\subsection{Preliminaries}
\label{sub:notation}
A discrete-time, finite-state Markov chain is a sequence of random variables $X_0,X_1,\ldots$ taking value over a set of states $V = \{1, \ldots, n\}$, such that for all $i\ge 1$ and all $u_0,\ldots,u_i \in V$ with $\prob(X_0 = u_0, \ldots, X_{i-1} = u_{i-1}) > 0$ we have $\prob(X_i = u_i | X_0 = u_0, \ldots, X_{i-1} = u_{i-1}) = \prob(X_i = u_i | X_{i-1} = u_{i-1})$.
Denote by $\bfP = [p_{uu'}]$ the transition matrix of the chain, so that $p_{uu'} = \prob(X_{i} = u' | X_{i-1} = u)$.
We assume the chain is \emph{ergodic}, and thus has a limit distribution that is independent from the distribution of $X_0$; the limit distribution then coincides with the \emph{stationary distribution} $\pi$.
Thus $\pi$ is the unique distribution vector such that for any distribution vector $\pi_0$:
\begin{align}
\pi = \pi \bfP = \lim_{t \to \infty} \pi_0\bfP^t
\end{align}
We denote by $\pi(u)$ the stationary probability, or \emph{mass}, of $u$, and we always denote by $v$ the target state whose mass is to be estimated. For any $V' \subseteq V$ we let $\pi(V')$ denote $\sum_{u \in V'} \pi(u)$.
We also assume the chain is \emph{time-reversible}, i.e.\ that for any pair of states $u$ and $u'$ we have:
\begin{align}
\label{eqn:detbalance}
\pi(u) p_{uu'} = \pi(u') p_{u'u}
\end{align}
We denote by $\tau$ the standard $\frac{1}{4}$-\emph{mixing time} of the chain.
In words, $\tau$ is the smallest integer such that after $\tau$ steps the total variation distance between $\pi$ and the distribution of $X_{\tau}$ is bounded by $\frac{1}{4}$, irrespective of the initial distribution.
Formally,
$\tau := \min\{t : d(t) \le \frac{1}{4}\}$,
where
\begin{align}
d(t) := \max_{\pi_0} \tvd{\pi_0 {\bfP}^t}{\pi} = \max_{\pi_0} \frac{1}{2} \|\pi_0 {\bfP}^t - {\pi}\|_1
\end{align}
After $\tau$ steps, the distribution of $X_t$ converges to $\pi$ exponentially fast; that is, if $t = \eta \tau$ with $\eta \ge 1$, then $\tvd{\pi_0 {\bfP}^t}{\pi} \le 2^{-\eta}$.
In the rest of the paper, $\|\cdot\|$ always denotes the $\ell^2$ norm.
One may refer to~\cite{Levin&2006} for a detailed explanation of the notions recalled here.

Unless necessary, we drop multiplicative factors depending only on $\epsilon,\delta$ (see below) from the asymptotic complexity notation.
Furthermore, we use the tilde notation to hide polylogarithmic factors, i.e.\ we denote $O(f \cdot \poly(\log(f)))$ by $\tilde{O}(f)$.

\subsection{Problem formulation}
\label{sub:problem}
Consider now a discrete-time, finite-state, time-reversible, ergodic Markov chain on $n$ states.
The chain is initially unknown and can be accessed via two operations (also called queries):
\\[3pt]
\hspace*{1em}\step(): accepts in input a state $u$, and returns state $u'$ with probability $p_{uu'}$
\\[3pt]
\hspace*{1em}\probe(): accepts in input a pair of states $u,u'$, and returns $p_{uu'}$
\\[3pt]
These queries are the \textit{de facto} model of previous work.
\step() is used in~\cite{Borgs&2012,Lee&2013,Borgs&2014,Lofgren&2014b,Lofgren&2015,Lofgren&2015b} to simulate the walk, assuming each step costs $O(1)$.
\probe() is used in~\cite{Lofgren&2014b,Lofgren&2015,Lofgren&2015b} to access the elements of the transition matrix, assuming again one access costs $O(1)$.
Here, too, we assume \step() and \probe() as well as all standard operations (arithmetics, memory access, \ldots) cost $O(1)$.
This includes set insertion and set membership testing; in case their complexity is $\omega(1)$, our bounds can be adapted correspondingly.
The problem can now be formalized as follows.
The algorithm is given in input a triple $(v,\epsilon,\delta)$ where $v$ is a state in the state space of the chain and $\epsilon,\delta$ are two reals in $(0,1)$.
It must output a value $\hat{\pi}(v)$ such that, with probability $1-\delta$, it holds $(1-\epsilon)\pi(v) \le \hat{\pi}(v) \le (1+\epsilon)\pi(v)$.
The complexity of the algorithm is counted by the total number of operations it performs.
Obviously we seek for an algorithm of minimal complexity.

A final remark.
We say state $u$ has been \emph{visited} if $u = v$ or if $u$ has been returned by a \step() call.
In line with previous work, we adopt the following ``locality'' constraint: the algorithm can invoke \probe() and \step() only on visited states.

\subsection{Related work}
\label{sub:rel}
Two recent works address precisely the problem of estimating $\pi(v)$ in Markov chains.
The key differences with our paper are that they consider general (i.e.\ not necessarily time-reversible) chains, and that we aim at a small \emph{relative} error for \emph{any} $\pi(v)$ and not only for large $\pi(v)$.
\begin{itemize}
\item \cite{Lee&2013} gives a local approximation algorithm based on estimating return times via truncated random walks.
Given any $\Delta > 0$, if $\pi(v) \ge \Delta$ the algorithm with probability $1-\delta$ outputs a multiplicative $\epsilon Z(v)$-approximation of $\pi(v)$, where $Z(v)$ is a ``local mixing time'' that depends on the structure of the chain.
The cost is $\tilde{O}(\ln{\!(1/\delta)} / \epsilon^3 \Delta)$ \step() calls.
If one wants a multiplicative $(1\pm\epsilon)$-approximation of $\pi(v)$ for a generic $v$, the cost becomes $\tilde{O}(\tau / \pi(v))$ \step() calls since one must wait for the walks to hit $v$ after having mixed.

\item \cite{Lofgren&2015b} gives an algorithm to approximate $\ell$-step transition probabilities based on coupling a local exploration of the transition matrix $\bfP$ with simulated random walks.
Given any $\Delta>0$, if the probability to be estimated is $\ge \Delta$ then with probability $1-\delta$ the algorithm gives a multiplicative $(1\pm\epsilon)$-approximation of it at an expected cost of $\tilde{O}(\ell^{1.5} \sqrt{d\, \ln{\!(1/\delta)}} \,/\, \epsilon \Delta^{0.5})$ calls to both \step() and \probe(), for a uniform random choice of $v$ in the chain, where $d$ is the density of $\bfP$.
To estimate $\pi(v)$ for a generic $v$ one must set $\ell = \tau$ and $\Delta = \pi(v)$, and since if the chain is irreducible then $d=\Omega(1)$, the bound stays at $\tilde{O}(\tau^{1.5} / \pi(v)^{0.5})$.
This does not contradict our lower bound of Appendix~\ref{sub:general_impossible}, since their model allows for probing transition probabilities even between unvisited states.
\end{itemize}

Similar results are known for specific Markov chains, and in particular for PageRank (note that in PageRank $\tau = O(1)$).
\cite{Borgs&2012,Borgs&2014} give an algorithm for approximating the PageRank $\pi(v)$ of the nodes $v$ having $\pi(v) \ge \Delta$, at the cost of $\tilde{O}(1/\Delta)$ \step() calls; again, if one desires a multiplicative $(1\pm\epsilon)$-approximation of $\pi(v)$, the cost becomes $\tilde{O}(1/\pi(v))$.
\cite{Lofgren&2014b} gives an algorithm, with techniques similar to~\cite{Lofgren&2015b}, for estimating the Personalized PageRank $\pi(v)$ of a node $v$; if one aims at a multiplicative $(1\pm\epsilon)$-approximation of $\pi(v)$, the algorithm makes $\tilde{O}(d^{\,0.5} / \pi(v)^{0.5})$ \step() and \probe() calls where $d$ is the average degree of the graph.
Similar bounds can be found in~\cite{Lofgren&2015} for Personalized PageRank on undirected graphs.

Summarizing, existing algorithms require either $\tilde{O}(\tau / \pi(v))$ or $\tilde{O}(\tau^{1.5} / \pi(v)^{0.5})$ \step() and \probe() calls to ensure a $(1\pm\epsilon)$-approximation of $\pi(v)$ for a generic state $v$.
Note that the complexity and approximation guarantees of these algorithms depend on knowledge of $\tau$; our algorithms are no exception, and we prove our bounds as a function of $\tau$.

Finally, for the problem of estimating the sum of a nonnegative $n$-entry vector $\mathbf{x}$ by sampling its entries $x_i$ with probability $\pi_i = x_i / \sum_{i}x_i$, the only algorithm existing to date is that of~\cite{Motwani&2007}.
That algorithm takes $O(\sqrt{n})$ samples independently of $\pi$, while ours needs $O(\sqrt{n})$ samples only in the worst case, i.e.\ if $\pi$ is (essentially) the uniform distribution.

\subsection{Our results}
\label{sub:results}
Our first contribution is \sumest, a randomized algorithm for estimating the sum $\gamma$ of a nonnegative vector $\mathbf{x}$, assuming one can sample its entries according to the probability distribution $\pi = \mathbf{x} / \gamma$.
Formally, we prove:
\begin{restatable}{theorem}{sumapprox}
\label{thm:sumapprox}
Given any $\delta,\epsilon \in (0,1)$, \sumest($\epsilon,\delta$) with probability at least $1-\delta$ returns a multiplicative $(1\pm \epsilon)$-approximation of $\gamma$ by taking $O\big(\|\pi\|^{-1} \epsilon^{-3}(\ln{\frac{1}{\delta}})^{3/2}\big)$ samples.
\end{restatable}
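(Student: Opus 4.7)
The natural estimator exploits the identity $\gamma = \mathbb{E}_{i \sim \pi}[x_i]/\|\pi\|^2$, which follows from $\sum_i \pi_i x_i = \sum_i x_i^2/\gamma = \gamma \|\pi\|^2$. Accordingly I would draw $k$ independent samples $i_1, \ldots, i_k$ from $\pi$, maintain the sample mean $\bar x = \frac{1}{k}\sum_j x_{i_j}$ (targeting $\gamma\|\pi\|^2$) together with the collision-pair rate $\hat p = \binom{k}{2}^{-1} \sum_{j<l} \mathbf{1}[i_j = i_l]$ (targeting $\|\pi\|^2$), and output $\hat\gamma = \bar x / \hat p$.

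The technical heart of the argument is a variance bound for both estimators, and the key inequality is $\|\pi\|_\infty \le \|\pi\|_2$. For the numerator, $\mathrm{Var}(x_{i_j}) \le \mathbb{E}[x_{i_j}^2] \le \|x\|_\infty \, \mathbb{E}[x_{i_j}] = \gamma\|\pi\|_\infty \cdot \gamma\|\pi\|^2$, so the relative variance of $\bar x$ is at most $1/(k\|\pi\|)$. For the denominator, expanding $\mathrm{Var}(\hat p)$ into its diagonal and cross-covariance contributions yields a dominant term proportional to $\|\pi\|_3^3/k$, and invoking $\|\pi\|_3^3 \le \|\pi\|_\infty \|\pi\|_2^2 \le \|\pi\|^3$ gives a relative variance of $O(1/(k\|\pi\|))$ again. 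Hence $k = \Theta(\|\pi\|^{-1}\epsilon^{-2})$ samples deliver each a $(1\pm\epsilon)$ approximation with constant probability, and standard median-of-means amplification over $O(\log(1/\delta))$ independent repetitions boosts confidence to $1-\delta$.

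Combining the two approximations via routine ratio-error propagation (if $\bar x$ and $\hat p$ lie within $(1\pm\epsilon')$ of their expectations, then $\hat\gamma \in (1\pm 3\epsilon')\gamma$) and rescaling $\epsilon'$ delivers the stated guarantee. Because $\|\pi\|$ is unknown in advance, I would wrap the whole procedure in a geometric doubling over $k = 2^r$, stopping as soon as the empirical collision rate certifies that the effective support of $\pi$ has been ``hit'' enough times. I expect this certification step to be the main obstacle: the stopping criterion is itself a random event, so one must show that the loop neither stops prematurely nor runs for too long with more than $\delta$ total probability, and the safety margin needed in this certificate (combined with the ratio-propagation blow-up) is what ultimately accounts for the extra $\epsilon^{-1}\sqrt{\log(1/\delta)}$ factor beyond the baseline $\epsilon^{-2}\log(1/\delta)$ cost of concentration alone.
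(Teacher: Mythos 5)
Your estimator is genuinely different from the paper's, and its core analysis is sound: the identity $\sum_i \pi_i x_i = \gamma\|\pi\|^2$, the collision estimator for $\|\pi\|^2$, the second-moment bounds via $\|\pi\|_\infty \le \|\pi\|_2$ and $\|\pi\|_3^3 \le \|\pi\|_\infty\|\pi\|_2^2$, and the Chebyshev-plus-median amplification all check out and would give $O(\|\pi\|^{-1}\epsilon^{-2}\log(1/\delta))$ samples \emph{if $k$ could be set in advance}. The paper takes a quite different route: \sumest{} is a single sequential pass that accumulates the running mass $w = \sum_{\text{draws}}\sum_{u \in S}\gamma_u$ and stops after a fixed number $\kerr$ of repeats, returning $w/r$. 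The key observation is that $r - w/\gamma$ is a martingale (the conditional repeat probability at each draw is exactly $\sum_{u\in S}\gamma_u/\gamma$), and a Freedman-type maximal inequality controls the ratio at the random stopping time; a separate two-case argument (heavy element vs.\ no heavy element) bounds the number of draws by $O(\|\pi\|^{-1}\epsilon^{-3}(\ln\frac{1}{\delta})^{3/2})$. What the paper's approach buys is precisely the part you correctly flag as your main obstacle: because the stopping rule (``$\kerr$ repeats seen'') is itself the estimator's denominator, there is no outer doubling loop, no union bound over rounds, and no bias from conditioning on a data-dependent certificate --- the martingale inequality is stated as ``for some $t$'' and so absorbs the random stopping time natively. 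What your approach buys is modularity and, in the fixed-$k$ regime, a better $\epsilon$ and $\delta$ dependence than the theorem claims.

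The one genuine gap is the certification loop, and it is not merely bookkeeping. If you stop at the first $k=2^r$ whose empirical collision count clears a threshold, the samples used for $\bar x$ and $\hat p$ are no longer unconditionally distributed as assumed in the variance bounds (you have conditioned on a collision-count event), and you must also rule out premature stopping caused by an upward fluctuation of collisions at small $k$, which would make $\hat p$ a gross overestimate and $\hat\gamma$ a gross underestimate. Both issues are fixable --- e.g.\ use the round-$r$ samples only for certification and draw a fresh batch for estimation, and assign geometrically decreasing failure budgets $\delta_r$ to the rounds so the union bound over the $O(\log(\|\pi\|^{-1}\epsilon^{-2}))$ rounds converges --- but this has to be written out, and it is exactly the step your proposal leaves as a sketch. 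Until that is done, the claimed $1-\delta$ guarantee is not established.
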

\noindent \sumest\ is extremely simple, yet it improves on the state-of-the-art $O(\sqrt{n})$ algorithm of~\cite{Motwani&2007}.
We prove $\Omega\big(\|\pi\|^{-1}\big)$ samples are necessary, too, to get a fair estimate of $\gamma$.

We then employ \sumest\ to build \taupiest, a randomized algorithm for approximating $\pi(v)$.
Random-walk-based sampling and time reversibility are the ingredients that allow one to make the connection.
We prove:
\begin{restatable}{theorem}{thmtaupi}
\label{thm:ub_taupi}
Given any $\delta,\epsilon \in (0,1)$ and any state $v$ in a time-reversible Markov chain, \taupiest($\epsilon, \delta, v$) with probability $(1-\delta)$ returns a multiplicative $(1 \pm \epsilon)$ approximation of $\pi(v)$ using $\tilde{O}(\tau\|\pi\|^{-1} \epsilon^{-3} (\ln{\frac{1}{\delta}})^{3/2}) = \tilde{O}(\tau\|\pi\|^{-1})$ elementary operations and calls to \step() and \probe().
\end{restatable}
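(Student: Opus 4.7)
The plan is to reduce estimating $\pi(v)$ to an application of \sumest\ (Theorem~\ref{thm:sumapprox}) by exhibiting a nonnegative vector $x \in \mathbb{R}^V$ whose sum is exactly $\pi(v)$ and whose induced probability distribution can be sampled efficiently via the chain. Fix a burn-in length $k = \Theta(\tau \log(1/(\epsilon\|\pi\|)))$ and, for brevity, let $p^k_{uu'}$ denote the $(u,u')$ entry of $\bfP^k$. Define $x_u := \pi(u)\, p^k_{uv}$ for each $u\in V$. Stationarity of $\pi$ under $\bfP^k$ gives
\begin{align}
\gamma := \sum_{u\in V} x_u = \sum_{u} \pi(u)\, p^k_{uv} = \pi(v),
\end{align}
so any multiplicative approximation of $\gamma$ is a multiplicative approximation of $\pi(v)$.

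Iterating the detailed-balance relation~(\ref{eqn:detbalance}) $k$ times yields $\pi(u)\, p^k_{uv} = \pi(v)\, p^k_{vu}$, hence the induced probability distribution $\pi' := x/\gamma$ satisfies $\pi'(u) = p^k_{vu}$. An independent draw from $\pi'$ is therefore produced by running a fresh $k$-step walk from $v$ using $k$ calls to \step() and reporting its endpoint. Feeding $\tilde O(\|\pi'\|^{-1})$ such draws into \sumest\ and applying Theorem~\ref{thm:sumapprox}, one obtains a multiplicative $(1\pm\epsilon)$-approximation of $\gamma = \pi(v)$ with confidence $1-\delta$, at a total cost of $\tilde O(\tau \|\pi'\|^{-1})$ in \step() calls (plus the ancillary operations performed by \sumest\ itself).

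It then remains to show $\|\pi'\|^{-1} = \Theta(\|\pi\|^{-1})$. Time-reversibility together with $k$ of order $\tau$ times polylogarithmic factors guarantees not only the $\ell^1$ mixing bound $\tvd{\pi'}{\pi} \le 2^{-k/\tau}$, but also an entrywise multiplicative bound $p^k_{vu} = (1 \pm o(\epsilon))\,\pi(u)$ on the set of states carrying essentially all the mass. This propagates to $\|\pi'\| = (1 \pm o(\epsilon))\,\|\pi\|$ and hence gives the claimed $\tilde O(\tau \|\pi\|^{-1} \epsilon^{-3}(\ln\tfrac{1}{\delta})^{3/2})$ total operation count.

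The main obstacle I anticipate is precisely this last step: transferring the $\ell^1$-style mixing guarantee into the entrywise multiplicative control needed to pin $\|\pi'\|$ to $\|\pi\|$ up to a $1 \pm o(\epsilon)$ factor requires leveraging the spectral structure of the reversible chain rather than invoking total variation alone. A secondary technicality is the interface between our sampling procedure and whatever auxiliary evaluations \sumest\ performs on the sampled states $u$; here the reversibility rewrite $x_u = \pi(v)\, p^k_{vu}$ together with the availability of \probe() should allow these evaluations to be folded into the $\tilde O(\tau \|\pi\|^{-1})$ budget without reintroducing a $1/\pi(v)$ factor.
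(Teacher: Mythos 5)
Your reduction has an elegant feature --- the target sum $\gamma = \sum_u \pi(u)\,p^k_{uv} = \pi(v)$ is \emph{exact}, with no mixing error in the quantity being estimated --- but it founders on a point you dismiss as a ``secondary technicality,'' and which is in fact the central mechanism of the paper's proof. \sumest\ does not merely need to \emph{sample} from the induced distribution; it must also \emph{evaluate} the entry $x_u$ of every sampled element, since the returned estimate is $w/r$ where $w$ accumulates $\sum_{u \in S} x_u$. In your construction $x_u = \pi(u)\,p^k_{uv} = \pi(v)\,p^k_{vu}$, and neither form is computable from the allowed queries: $\pi(v)$ is the unknown being estimated, and $p^k_{vu}$ is a $k$-step transition probability, whereas \probe() returns only single-step entries of $\bfP$. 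Estimating $p^k_{vu}$ empirically would cost $\Omega(1/p^k_{vu}) \approx \Omega(1/\pi(u))$ walks per sample, reintroducing exactly the small-$\pi$ dependence the theorem is meant to avoid. The paper's choice is instead $\gamma_u = \pi(u)/\pi(v)$, whose sum is $1/\pi(v)$ and which --- crucially --- telescopes along the walk path via detailed balance as $\gamma_{u_{i+1}} = \gamma_{u_i}\cdot \probe(u_i,u_{i+1})/\probe(u_{i+1},u_i)$, so it is computable with $O(1)$ single-step \probe() calls per walk step. This telescoping is the idea your proposal is missing, and without it (or an equivalent computable weighting) the reduction to \sumest\ cannot be executed.

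Conversely, the step you flag as the ``main obstacle'' is not one. You do not need any entrywise multiplicative control of $p^k_{vu}$ against $\pi(u)$ to compare norms: since $\|\pi - \pi'\| \le \|\pi - \pi'\|_1 = 2\tvd{\pi}{\pi'}$, taking $k$ a suitable multiple of $\tau\log(1/\|\pi\|)$ makes $\tvd{\pi}{\pi'}$ a small fraction of $\|\pi\|$, whence $\|\pi'\|^{-1} \le 2\|\pi\|^{-1}$ by the triangle inequality; this is exactly how the paper's Lemma~\ref{lem:adapt} handles the sample-count bound. The genuinely delicate part of the paper's argument lies elsewhere: because its vector entries $\gamma_u = \pi(u)/\pi(v)$ are exact but the samples come from $\pi' \ne \pi$, the martingale analysis of Theorem~\ref{thm:sumest_approx} must be rerun with the perturbed repeat probabilities $P_j'$, and the accumulated discrepancy $\gamma r^{-1}\sum_j |P_j - P_j'| \le \gamma r^{-1} s\,\tvd{\pi}{\pi'}$ must be driven below $\frac{\epsilon}{2}\gamma$ by taking $\tvd{\pi}{\pi'} \le (\epsilon\|\pi\|/\ln(3/\delta))^c$ for a large constant $c$ --- an error-transfer argument your exact-sum formulation would have let you skip, had the entries been computable.
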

\noindent Previous algorithms work also for general (i.e.\ non-reversible) chains; but on the $n-o(n)$ states with mass $\pi(v) = O(1/n)$, their complexity becomes at least $\tilde{O}(\tau n)$~\cite{Lee&2013} or $\tilde{O}(\tau^{1.5} \sqrt{n})$~\cite{Lofgren&2015b}.
In fact, $\pi(v)$ can be arbitrarily small (even exponentially small in $n$ and $\tau$) for almost all states in the chain, so for almost all states the complexity of previous algorithms blow up while that of \taupiest\ remains unchanged: since $\|\pi\|^{-1} \le \sqrt{n}$ for any $\pi$, the complexity of \taupiest\ is at most $\tilde{O}(\tau \sqrt{n})$.

Next, we show that \taupiest\ is optimal as a function of $\tau$, $n$ and $\|\pi\|$, up to small factors.
In fact, no algorithm can perform better even if equipped with an operation \neigh{$u$} that returns all incoming and outgoing transition probabilities of $u$.
Formally, we prove:
\begin{restatable}{theorem}{thmlbtaupi}
\label{thm:lb_taupi}
For any function $\nu(n) \in \Omega(1/\sqrt{n}) \cap O(1)$ there is a family of time-reversible chains on $n$ states where (a) $\|\pi\|=\Theta(\nu(n))$, and (b) there is a target state $v$ such that, to estimate its mass $\pi(v)$ within any constant multiplicative factor with constant probability, any algorithm requires $\Omega(\tau \|\pi\|^{-1} / \ln{n})$ \neigh{} calls where $\tau$ is the mixing time of the chain.
\end{restatable}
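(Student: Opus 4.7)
I would prove Theorem~\ref{thm:lb_taupi} via Yao's minimax principle, exhibiting a distribution over time-reversible Markov chains on $n$ states together with a target state $v$ so that no deterministic algorithm using $q = o(\tau\,\|\pi\|^{-1}/\ln n)$ \neigh{} calls distinguishes, with probability better than $1/2+o(1)$, between two sub-ensembles whose values of $\pi(v)$ differ by a fixed constant factor. Note that by detailed balance (equation~\eqref{eqn:detbalance}), the weights exposed inside any connected explored region already determine $\pi$ up to a global normalisation; hardness therefore lies in estimating the mass of the \emph{unexplored} part of the chain.

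For the construction I would fix $m = \lceil 1/\nu(n)^2\rceil$ and assemble a ``chamber'' chain: $\sqrt m$ chambers, each an expander on $\sqrt m$ ``heavy'' states, linked in a cycle by thin bridges whose weights and self-loop probabilities are tuned so that the global mixing time is $\Theta(\tau)$ and the maximum degree is $O(\ln n)$. The remaining $n-m$ ``light'' states are hung off the heavy states by tiny transition probabilities, padding the count to $n$ without disturbing the masses, so that $\|\pi\|=\Theta(\nu)$. The target $v$ sits in a fixed anchor chamber. On top of this skeleton I define chains $\mathbf{M}^\sigma$ indexed by a random subset $\sigma$ of $\Theta(\sqrt m)$ of the $\sqrt m$ chambers; inside each chamber of $\sigma$ the weights are mildly reshuffled so that its equilibrium mass is boosted by a constant factor while individual entries remain of order $1/m$ and $\|\pi\|$ stays $\Theta(\nu)$. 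Summing across $\sigma$ shifts a constant fraction of the total mass away from the anchor, producing a constant-factor change in $\pi(v)$.

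The indistinguishability argument goes through as follows. By the locality constraint, after $q$ \neigh{} queries the algorithm has explored only a connected induced subchain $S_q$ of size $O(q\ln n)$, and its output is a function of the weights inside $S_q$ alone. A conductance / hitting-time argument bounds the number of distinct chambers met by $S_q$ by $O(q\ln n/\tau)$: entering each new chamber from $v$ requires crossing a fresh bridge, which \neigh{} cannot ``teleport'' across, because although the bridge edge is revealed by a single \neigh{} call at its endpoint, extending $S_q$ past that bridge is a walk-like process paying, in expectation, the bridge's hitting-time cost $\Theta(\tau/\sqrt m)$ per new chamber. Since $\sigma$ is uniform among $\Theta(\sqrt m)$-subsets of the $\sqrt m$ chambers, a standard birthday-type estimate shows that the posterior of $\sigma$ given $S_q$ is statistically close to the prior unless $S_q$ touches $\Omega(\sqrt m)$ chambers; combining, $q\ln n/\tau = \Omega(\sqrt m)$, i.e.\ $q = \Omega(\tau\,\|\pi\|^{-1}/\ln n)$.

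The main obstacle is making the two polynomial factors $\tau$ and $\|\pi\|^{-1}$ compose tightly and simultaneously. A single planted defect would shift $\pi(v)$ by only $o(1)$, while too many defects would either be detectable from the anchor chamber alone or would move $\|\pi\|$ off $\Theta(\nu)$. The chosen regime---reshuffling a random $\Theta(\sqrt m)$-fraction of chambers---is essentially the unique one that simultaneously yields a constant-factor change in $\pi(v)$, preserves $\|\pi\|=\Theta(\nu)$, and forces the algorithm to touch $\Omega(\sqrt m)$ chambers. The only genuinely non-trivial step is translating the bridge conductance into a query cost under the \neigh{} oracle, which is strictly stronger than \step{}/\probe{}: this uses the bounded degree $O(\ln n)$ to cap the blow-up of $|S_q|$ per query, yielding the $1/\ln n$ factor in the final bound.
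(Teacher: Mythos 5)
Your overall strategy (Yao's principle, a planted ensemble, exploiting locality) is reasonable in spirit, but two of its load-bearing steps fail under the model of Section~\ref{sec:lb}, and the paper's actual proof is built precisely to avoid both. The first problem is the cost amplification. You charge $\Theta(\tau/\sqrt{m})$ queries per new chamber by appealing to the hitting time across a thin bridge, treating the exploration as ``walk-like''. But in the strengthened model the only charged operations are \neigh{} calls, one per newly visited state, and all subsequent \step()/\probe() calls on already-visited states are free; a thin bridge is therefore crossed at a cost of $O(1)$ new states (invoke \step() for free on the boundary state until it happens to return the bridge endpoint), no matter how small its conductance. Low conductance buys you nothing in this cost measure. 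The paper obtains its amplification combinatorially instead: each edge of a $d$-regular expander $G_0$ on $n_0$ nodes is replaced by a star whose $\Delta-2$ dead-end leaves have exactly the same transition probabilities to/from the centre as the two genuine exits, so any algorithm must \emph{visit} --- and hence pay a \neigh{} call for --- $\Theta(\Delta)$ indistinguishable decoys in expectation before finding the way to the next original node.

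The second problem is the distinguishing task itself. Your two sub-ensembles differ in whether a constant fraction of chambers has had its internal weights reshuffled so as to boost its mass; by detailed balance, \neigh{} reveals those weights the moment a chamber is entered, so each touched chamber announces whether it lies in $\sigma$. Sampling $O(1)$ chambers then already distinguishes the two ensembles with constant advantage, which destroys the claim that $\Omega(\sqrt{m})$ chambers must be touched. The paper hides the relevant information where local inspection cannot reach it: the two instances are $G(\Delta,n_0)$ and $G(\Delta,n_0/2)$, padded with negligible-mass states so that both have $n$ states and essentially the same mixing time and $\|\pi\|$, and they differ only in the \emph{number} of mutually indistinguishable original nodes. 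Telling them apart is a birthday problem requiring $\Omega(\sqrt{n_0})$ distinct original nodes, hence $\Omega(\Delta\sqrt{n_0})=\Omega(\sqrt{\Delta n})$ visited states, which equals $\Omega(\tau\|\pi\|^{-1}/\ln n)$ after substituting $\tau=\Theta(\tau_0\Delta)$ and $\|\pi\|=\Theta(\sqrt{\Delta/n})$. To repair your argument you would need to replace the conductance/hitting-time mechanism with an indistinguishable-decoy gadget, and the locally detectable planted defect with a quantity encoded only globally, such as a node count.
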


Although bounding time complexity is our primary goal, in some scenarios one wants to bound the \textit{footprint}, i.e.\ the number of distinct states visited.
Obviously, the footprint of \taupiest\ is bounded by its complexity (Theorem~\ref{thm:ub_taupi}).
We give a second algorithm, \taunest, whose footprint can be smaller than that of \taupiest\ depending on $\tau, n$, and $\|\pi\|$.
More precisely, we prove a footprint bound that is conditional on the concentration of the footprint itself (see Subsection~\ref{sub:fma} for the intuition behind it).
\begin{restatable}{theorem}{thmtaun}
\label{thm:ub_taun}
Let $N_{v,T}$ be the number of distinct states visited by a random walk of $T$ steps starting from $v$.
Assume for a function $\bar{\tau}$ of the chain we have $\prob[ N_{v,T} \notin \Theta(\E[N_{v,T}])] = o\big(\frac{\bar{\tau}}{\E[N_{v,T}]})$.
Then, given any $\delta,\epsilon \in (0,1)$, with probability $(1-\delta)$ one can obtain a multiplicative $(1\pm\epsilon)$-approximation of $\pi(v)$ by visiting $O(f(\epsilon,\delta)(\tau \ln n + \sqrt{\bar{\tau} n}))$ distinct states.
\end{restatable}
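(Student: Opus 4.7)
The plan is to design FullMassApprox as a footprint-efficient variant of MassApprox. Instead of launching many independent random walks (as in the proof of Theorem~\ref{thm:ub_taupi}), FullMassApprox performs $O(f(\epsilon,\delta))$ independent trials, where each trial consists of a burn-in walk of length $\Theta(\tau \ln n)$ starting at $v$, followed by a continuation walk of length $\bar{\tau}$. After the burn-in the walk is within $1/\poly(n)$ of $\pi$ in total variation, so the states visited during the continuation walk (together with the transition probabilities observed along the way) serve as the near-stationary sample set fed to SumApprox. The final estimate is the median of the per-trial outputs. Correctness is then inherited from Theorems~\ref{thm:sumapprox} and~\ref{thm:ub_taupi}: each trial returns a $(1\pm\epsilon)$-approximation with constant probability, and the median boosts this to $1-\delta$.

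The heart of the proof is the footprint analysis. The burn-in contributes at most $O(\tau \ln n)$ distinct states per trial, giving the first term of the bound. For the continuation walk, I would invoke the concentration hypothesis to conclude that $N_{v,\bar{\tau}} = \Theta(\E[N_{v,\bar{\tau}}])$ except with probability $o(\bar{\tau}/\E[N_{v,\bar{\tau}}])$, and bound $\E[N_{v,\bar{\tau}}]$ directly. After the burn-in the pointwise bound $\prob[X_t = u] \le (1+o(1))\pi(u)$ holds at every subsequent step, so a union bound over $\bar{\tau}$ steps combined with Cauchy--Schwarz yields
\begin{align*}
\E[N_{v,\bar{\tau}}] \;=\; \sum_u \prob[u \text{ visited in } \bar{\tau} \text{ steps}] \;\le\; \sum_u \min\!\bigl(1,\, c\,\bar{\tau}\,\pi(u)\bigr) \;\le\; \sqrt{c\,\bar{\tau}}\,\sum_u \sqrt{\pi(u)} \;\le\; \sqrt{c\,\bar{\tau}\,n},
\end{align*}
using $\sum_u \sqrt{\pi(u)} \le \sqrt{n \sum_u \pi(u)} = \sqrt{n}$. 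Summing both contributions over the $O(f(\epsilon,\delta))$ trials, and applying a union bound on the concentration hypothesis across trials, yields the claimed $O\bigl(f(\epsilon,\delta)(\tau \ln n + \sqrt{\bar{\tau}\,n})\bigr)$ footprint bound with overall failure probability at most $\delta$.

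The main obstacle I foresee lies in the regime where $\bar{\tau}$ is not substantially larger than $\tau$: during the first $O(\tau)$ steps of the continuation walk the pointwise bound $\prob[X_t = u] \le c\,\pi(u)$ does not yet hold, so a naive application of Cauchy--Schwarz overcounts. I would handle this by absorbing the short pre-mixing prefix of the continuation walk into the burn-in accounting (which only affects the implicit constant on the $\tau \ln n$ term) and applying the Cauchy--Schwarz computation above only to the post-mixing suffix. Beyond that, the remainder is standard bookkeeping on top of the MassApprox analysis, checking that the mild correlation between samples drawn from a single walk does not degrade the guarantees of Theorem~\ref{thm:sumapprox}, which for time-reversible chains follows from the usual spectral mixing argument already used in the proof of Theorem~\ref{thm:ub_taupi}.
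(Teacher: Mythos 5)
Your proposal has a genuine gap, and it is the central one. A continuation walk of length $\bar{\tau}$ delivers essentially \emph{one} near-stationary sample, whereas \sumest{} needs $\Theta(\|\pi\|^{-1})$ of them to accumulate its $\kerr$ repeats; if instead you feed \sumest{} \emph{all} $\bar{\tau}$ states of the continuation walk as ``samples,'' its analysis collapses, because consecutive states of a walk within a window of length $O(\tau)$ are strongly (not ``mildly'') correlated, and the repeats you would count are short-range recurrences of the walk rather than birthday-type collisions under $\pi$, so the ratio $w/r$ no longer estimates $\gamma$. Consequently each of your trials cannot return a $(1\pm\epsilon)$-approximation with constant probability, and the median step has nothing to boost. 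Relatedly, your Cauchy--Schwarz bound $\E[N_{v,\bar{\tau}}]\le\sqrt{c\,\bar{\tau}\,n}$ is correct but weaker than the trivial bound $N_{v,\bar{\tau}}\le\bar{\tau}$ whenever $\bar{\tau}=o(n)$, which is a signal that the walk you are analyzing is far too short to be informative.

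The missing idea is the one the paper's proof is built around. \taunest{} runs a \emph{single} long walk, inserts into $S$ every state it visits (not only the periodic samples), and tests for a repeat only at the periodic sampling times (every $\tilde{O}(\tau)$ steps), so that the tested state is near-stationary and the per-sample repeat probability is $\approx\pi(S)=M_t$. The crux is a lemma relating the number of distinct visited states to their aggregate stationary mass: if $N_t\ge q$ then, except with probability $\epsilon+\delta$, $M_t\ge q^2\frac{\epsilon}{4tn}$, because by a Markov-inequality argument only few of the $t$ steps can land on states of individual mass below $\frac{\epsilon q}{2tn}$. Choosing $T$ with $\E[N_{v,T}]=\Theta(\sqrt{n\bar{\tau}})$ and using the \emph{lower tail} of the concentration hypothesis, one gets $M_T=\Omega(\bar{\tau}/T)$, hence a repeat probability $\Omega(\bar{\tau}/T)$ per sample, hence $\kerr$ repeats within $O(T)$ further steps; the \emph{upper tail} of the same hypothesis then caps the footprint at $O(\E[N_{v,T}])=O(\sqrt{n\bar{\tau}})$. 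Your proposal uses the concentration hypothesis only in the upper-tail direction and never connects the count of distinct visited states to their aggregate mass, so it cannot establish that the algorithm actually terminates within the claimed footprint.
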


\noindent If in Theorem~\ref{thm:ub_taun} we have $\bar{\tau} = \tau$, then \taunest\ is essentially optimal too. Formally:
\begin{restatable}{theorem}{thmlbtaun}
\label{thm:lb_taun}
For any function $\tau(n) \in \Omega(\ln{n}) \cap O(n)$ there is a family of time-reversible chains on $n$ states where (a) the mixing time is $\tau=\Theta(\tau(n))$, and (b) there is a target state $v$ such that, to estimate its mass $\pi(v)$ within any constant multiplicative factor with constant probability, any algorithm requires $\Omega(\sqrt{\tau n/\ln{n}})$ \neigh{} calls.
\end{restatable}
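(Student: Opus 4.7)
The plan is a direct reduction to Theorem~\ref{thm:lb_taupi}, which already produces its lower bound in exactly the same query model (\neigh{} calls). The key move is to choose the free parameter $\nu$ of Theorem~\ref{thm:lb_taupi} so that its bound $\Omega(\tau\|\pi\|^{-1}/\ln n)$ collapses to the desired $\Omega(\sqrt{\tau n/\ln n})$. Setting $\nu(n):=\sqrt{\tau(n)/(n\ln n)}$ yields $\|\pi\|^{-1}=\Theta(\sqrt{n\ln n/\tau(n)})$, and a one-line computation then gives $\tau\|\pi\|^{-1}/\ln n=\Theta(\sqrt{\tau n/\ln n})$ whenever the chain's mixing time $\tau$ is $\Theta(\tau(n))$. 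The hypothesis $\tau(n)\in\Omega(\ln n)\cap O(n)$ is precisely the condition that ensures $\nu(n)\in\Omega(1/\sqrt n)\cap O(1)$, so the invocation of Theorem~\ref{thm:lb_taupi} is legitimate.

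The remaining gap is that Theorem~\ref{thm:lb_taupi} controls $\|\pi\|$ but leaves the mixing time implicit, so one must verify that its construction can be reparameterized to also satisfy $\tau=\Theta(\tau(n))$. I would inspect the proof of Theorem~\ref{thm:lb_taupi} and identify two independent tuning knobs — typically the size $k$ of a fast-mixing ``core'' carrying most of the stationary mass, and the conductance of a bottleneck separating the core from a slow-mixing appendage containing the target $v$. In such a construction $\|\pi\|^2\approx 1/k$, so one fixes $k=\Theta(n\ln n/\tau(n))$ to hit the norm requirement, while the bottleneck conductance can be tuned independently to achieve any mixing time in the admissible range $\Omega(\ln n)\cap O(n)$.

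Once both $\|\pi\|=\Theta(\nu(n))$ and $\tau=\Theta(\tau(n))$ are secured, Theorem~\ref{thm:lb_taupi} immediately yields
\[
\Omega\!\left(\frac{\tau\|\pi\|^{-1}}{\ln n}\right) \;=\; \Omega\!\left(\sqrt{\frac{\tau n}{\ln n}}\right)
\]
\neigh{} calls, which is precisely the statement of Theorem~\ref{thm:lb_taun}. No further translation between query models is needed, since both results are stated for \neigh{}.

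The principal obstacle is certifying that the family realizing Theorem~\ref{thm:lb_taupi} (or a minor variant) attains both targets $(\|\pi\|,\tau)$ simultaneously over the \emph{entire} admissible range of $\tau(n)$ — the counting step is a one-liner, but the joint flexibility of the construction is not a priori obvious from the bare statement of Theorem~\ref{thm:lb_taupi}. If the original family proves too rigid, the fallback is to rebuild it from scratch along the same recipe (a near-uniform bag of $\Theta(n\ln n/\tau)$ core states plus a slow-mixing gadget hiding $v$ among $\Omega(\sqrt{\tau n/\ln n})$ equivalent candidates) and to prove indistinguishability via the same Yao/Le Cam style argument. The underlying intuition is unchanged: locally, $v$ looks identical to many candidate states, so any algorithm must physically touch enough of them to single $v$ out.
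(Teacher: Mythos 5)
Your high-level plan is the same as the paper's: both Theorem~\ref{thm:lb_taupi} and Theorem~\ref{thm:lb_taun} are proved there from a single one-parameter family, and the arithmetic step you isolate (that a $\sqrt{\Delta n}$-type query bound dominates $\sqrt{\tau n/\ln n}$) is exactly the paper's closing computation. You also correctly identify the one real obstacle — that the statement of Theorem~\ref{thm:lb_taupi} says nothing about the mixing time, so the reduction cannot be done from the statement alone. However, your proposed way of closing that gap does not match reality and would not go through as described. The construction is not a ``fast-mixing core plus a bottleneck hiding $v$ in a slow appendage'' with two independent knobs; it is a $d$-regular expander $G_0$ on $n_0$ nodes whose every edge is subdivided into a star of $\Delta$ nodes, giving $n=\Theta(n_0\Delta)$, $\tau=\Theta(\Delta\ln n_0)$ and $\|\pi\|=\Theta(\sqrt{\Delta/n})$, and the hardness comes from a birthday argument for distinguishing $G(\Delta,n_0)$ from $G(\Delta,n_0/2)$, not from hiding $v$ among candidates. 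Crucially there is only \emph{one} free parameter ($n_0$, with $\Delta=\Theta(n/n_0)$), and the paper explicitly notes that $\|\pi\|$ and $\tau$ cannot be set independently. As a consequence your calibration $\|\pi\|=\Theta\big(\sqrt{\tau(n)/(n\ln n)}\big)$ is not attainable throughout the admissible range: the family satisfies $\|\pi\|=\Theta\big(\sqrt{\tau/(n\ln n_0)}\big)$, which matches your target only when $\ln n_0=\Theta(\ln n)$; for $\tau(n)$ near the top of the range (e.g.\ $\tau(n)=\Theta(n)$, forcing $n_0=\Theta(1)$) it is off by a $\sqrt{\ln n/\ln n_0}$ factor.

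The fix is to reverse the order of quantifiers, as the paper does: first choose $n_0$ (hence $\Delta$) so that $\tau=\Theta(\Delta\ln n_0)=\Theta(\tau(n))$ — possible for any $\tau(n)\in\Omega(\ln n)\cap O(n)$ — and then observe directly that the family's lower bound $\Omega(\sqrt{\Delta n})$ satisfies $\sqrt{\Delta n}=\Omega\big(\sqrt{\tau n/\ln n_0}\big)=\Omega\big(\sqrt{\tau n/\ln n}\big)$, the last step using only $n_0\le n$. This makes the $\ln n$ in the denominator do the work of absorbing the $\ln n_0$ discrepancy, and no exact calibration of $\|\pi\|$ to a prescribed $\nu(n)$ is needed. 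With that substitution your argument becomes the paper's proof; without it, the step ``the bottleneck conductance can be tuned independently'' is a genuine gap, since no such independent tuning exists in the family actually constructed.
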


\section{Estimating sum by weighted sampling}
\label{sec:sumapprox}
In this section we analyse the following problem.
We are given a vector of nonnegative reals $\gamma_u$ indexed by the elements $u$ of a set $V$.
The vector is unknown, including its length, but we can draw samples from $V$ according to the distribution $\pi$ where $u$ has probability $\gamma_u / \sum_{u \in V}\!\gamma_u$.
The goal is to approximate the vector sum $\gamma = \sum_{u \in V}\!\gamma_u$.
We describe a simple randomized algorithm, \sumest, which proceeds by repeatedly drawing samples and checking for repeats (i.e.\ a draw that yields an element already drawn before).
The key intuition is the following: at any instant, if $S \subseteq V$ is the subset of elements drawn so far, then the next draw is a repeat with probability $\sum_{u \in S} \gamma_u / \gamma$.
By drawing a sequence of samples we can thus flip a sequence of binary random variables, each one telling if a draw is a repeat, whose expectation is known save for the factor $1/\gamma$.
If the sum of these random variables is sufficiently close to its expectation, one can then get a good approximation of $\gamma$ by simply computing a ratio.
The code of \sumest\ is listed below.

\renewcommand{\thealgorithm}{}
\begin{algorithm}[h!]
\caption{\sumest($\epsilon, \delta$)}
\begin{algorithmic}[1]
\small
\State $S \leftarrow \emptyset$ \Comment{distinct elements drawn so far}
\State $w_S\leftarrow 0$ \Comment{$\sum_{u \in S} \gamma_u$ for the current $S$}
\State $w\leftarrow 0$ \Comment{cumulative sum of $\sum_{u \in S} \gamma_u$ so far}
\State $r\leftarrow 0$ \Comment{number of repeats so far}
\State $\kerr \leftarrow \lceil\frac{2+4.4\epsilon}{\epsilon^2}\ln{\!\frac{3}{\delta}}\rceil$ \Comment{halting threshold on the number of repeats}
\vspace*{0.4em}
\While{$r < \kerr$}
\State $w\leftarrow w + w_S$
\State $(u,\gamma_u) \leftarrow $ sample drawn from distribution $\pi$
\If{$u \in S$} \Comment{detect collision}
  \State $r \leftarrow r+1$
\Else 
  \State $S \leftarrow S \cup \{u\}$
  \State $w_S \leftarrow w_S + \gamma_u$
\EndIf
\EndWhile
\State \textbf{return} $w/r$ \Comment{estimate of $\gamma$}
\end{algorithmic}
\end{algorithm}

We prove:
\begin{theorem}
\label{thm:sumest_approx}
\sumest($\epsilon, \delta$) with probability at least $1 - \frac{2\delta}{3}$ returns an estimate $\hat{\gamma}$ such that $|\hat{\gamma} - \gamma| < \epsilon \gamma$.
\end{theorem}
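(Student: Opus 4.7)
Set $k := \kerr$ for brevity, let $B_s \in \{0,1\}$ indicate that iteration $s$ is a repeat, and write $p_s := \Pr[B_s = 1 \mid \mathcal{F}_{s-1}] = w_{S_{s-1}}/\gamma$. The natural object is the martingale
\begin{equation*}
M_t \;:=\; r_t - w_t/\gamma \;=\; \sum_{s \le t}(B_s - p_s),
\end{equation*}
with bounded increments $|M_t - M_{t-1}| \le 1$ and predictable quadratic variation $V_t := \sum_{s\le t} p_s(1-p_s) \le w_t/\gamma$. Let $T := \min\{t : r_t = k\}$; this is a.s.\ finite, since once $S$ contains the full support of $\pi$ every draw is a repeat and $r$ grows deterministically. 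The algorithm returns $\hat\gamma = w_T/k$, so the conclusion $|\hat\gamma - \gamma| < \epsilon\gamma$ is \emph{exactly} the event $|M_T| < \epsilon k$. I will bound the two tails separately using a Freedman/Bernstein-type inequality of the form $\Pr[\exists t : M_t \ge a,\ V_t \le \sigma^2] \le \exp\bigl(-a^2/(2\sigma^2 + 2a/3)\bigr)$ and then union-bound.

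\textbf{Easy direction.} On the event $\{M_T > \epsilon k\}$ the identity $w_T/\gamma = k - M_T$ gives the a priori bound $V_T \le w_T/\gamma < (1-\epsilon)k < k$ \emph{for free}. Freedman applied with $a = \epsilon k$ and $\sigma^2 = k$ then yields a bound of the form $\exp\bigl(-\epsilon^2 k/(2 + 2\epsilon/3)\bigr)$, which is at most $\delta/3$ by the definition of $k$.

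\textbf{Hard direction (main obstacle).} The symmetric estimate $\Pr[M_T < -\epsilon k]$ is the one delicate point: the bad event does not directly bound $V_T$, and in fact it \emph{forces} $w_T$ to be large, since reaching $k$ repeats is hard precisely when the $p_s$'s are small. I would bypass this by routing through the first-passage stopping time $t^\star := \min\{t : w_t/\gamma > (1+\epsilon)k\}$. If the bad event occurs then $t^\star \le T$; the single-step increment satisfies $w_{S_{t^\star-1}}/\gamma \le 1$, so $w_{t^\star}/\gamma \le (1+\epsilon)k + 1$ and hence $V_{t^\star} \le (1+\epsilon)k + 1$; and $r_{t^\star} \le k$ combined with $w_{t^\star}/\gamma > (1+\epsilon)k$ forces $M_{t^\star} < -\epsilon k$. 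Thus the bad event is contained in $\{\exists t : M_t < -\epsilon k,\ V_t \le (1+\epsilon)k + 1\}$, and Freedman gives a bound of order $\exp\bigl(-\epsilon^2 k/(2 + 8\epsilon/3 + o(1))\bigr)$. The comfortable gap between $8/3 \approx 2.67$ and the constant $4.4$ baked into $\kerr$ absorbs both the additive $+1$ in $\sigma^2$ and the $a/3$ Freedman correction, again yielding $\le \delta/3$. Union-bounding the two tails completes the proof; every step besides the first-passage trick is routine plug-in.
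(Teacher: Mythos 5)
Your proposal is correct and follows essentially the same route as the paper's proof: the same martingale $r_t - w_t/\gamma$, a Freedman-type inequality applied separately to the two tails, and a first-passage argument to obtain the a priori variance bound $(1+\epsilon)\kerr+1$ for the hard (overestimate) direction, with constants that check out against the definition of $\kerr$. The only cosmetic differences are that you stop at the first passage of $w_t/\gamma$ rather than of $-M_t$ (both yield the same variance bound) and that you invoke the sharper $2\sigma^2+2a/3$ form of Freedman where the paper uses the $2(\sigma^2+a)$ form of Alon et al.
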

\begin{proof}
\label{apx:proof_sumest_approx}
We make use of a martingale tail inequality originally from~\cite{Freedman1975} and stated (and proved) in the following form as Theorem 2.2 of~\cite{Alon&2010}, p.~1476:
\begin{theorem}[\cite{Alon&2010}, Theorem 2.2]
\label{thm:alon}
Let $(Z_0,Z_1,\ldots)$ be a martingale with respect to the filter $(\mathcal{F}_i)$. Suppose that $Z_{i+1}-Z_i \le M$ for all $i$, and write $V_t = \sum_{i=1}^t Var(Z_i|\mathcal{F}_{i-1})$. Then for any $z,v>0$ we have
\[
\prob\big[Z_t \ge Z_0 + z, V_t \le v \text{ for some } t\big] \le \exp\!{\Big[\!-\frac{z^2}{2(v+Mz)}\Big]}
\]
\end{theorem}
Let us plug into the formula of Theorem~\ref{thm:alon} the appropriate quantities from \sumest:
\begin{itemize}\itemsep0pt
\item Let $X_i$ be the $(i+1)^{th}$ sample (i.e.\ $(X_i, \gamma_{X_i})$ is the pair $(u,\gamma_u)$ drawn at the $(i+1)^{th}$ invocation of line 8).
\item Let $\mathcal{F}_i$ be the event space generated by $X_0, \dots, X_i$, so that for any random variable $Y$, with $\E[Y|\mathcal{F}_i]$ we mean $\E[Y|X_0,\dots,X_i]$ and with $Var[Y|\mathcal{F}_i]$ we mean $Var[Y|X_0,\dots,X_i]$.
\item Let $\chi_i = \mathbbm{1}[X_i \in \bigcup_{j=0}^{i-1}\{X_j\}]$ be the indicator variable of a repeat on the $(i+1)^{th}$ sample.
\item Let $P_i=\sum_{u\in \cup_{j=0}^{i-1} \{X_j\}}\! \frac{\gamma_{u}}{\gamma}$ be the probability of a repeat on the $(i+1)^{th}$ sample as a function of all the (distinct) samples up to the $i^{th}$,
i.e.\ $P_i=\E[\chi_i|\mathcal{F}_{i-1}] \le 1$.
\item Let $Z_i=\sum_{j=0}^i (\chi_j - P_{j})$;
it is easy to see that $(Z_i)_{i \ge 0}$ 
is a martingale with respect to the filter $(\mathcal{F}_i)_{i\ge 0}$, since $Z_i$ is obtained by adding to $Z_{i-1}$ the indicator variable $\chi_i$ and subtracting $P_i$ i.e.\ its expectation in $\mathcal{F}_{i-1}$.
More formally, $\E[Z_i|\mathcal{F}_{i-1}]=\E[Z_{i-1} + \chi_i- P_i|\mathcal{F}_{i-1}]$, and since $Z_{i-1}$ and $P_i$ are completely determined by $X_0,\dots,X_{i-1}$,
the right-hand term is simply $Z_{i-1} + (\E[\chi_i|\mathcal{F}_{i-1}]-P_i) = Z_{i-1}$. Note also that $Z_0 = 0$.
\item Let $M=1$, noting that $|Z_{i+1}-Z_i| = |\chi_{i+1} - P_{i+1}| \le 1$ for all $i$.
\end{itemize}
\vspace{5pt}
Finally, note that $Var(Z_j|\mathcal{F}_{j-1}) = Var(\chi_j|\mathcal{F}_{j-1})$ (as $Z_j = Z_{j-1} + \chi_j - P_j$ and, again, $Z_{j-1}$ and $P_j$ are completely determined by $X_0,\dots,X_{j-1}$).
Since $Var(\chi_j|\mathcal{F}_{j-1}) = P_j(1-P_j) \le P_j$, we have $V_i = \sum_{j=1}^i Var(Z_j|\mathcal{F}_{j-1}) \le \sum_{j=1}^i P_j$.
Theorem~\ref{thm:alon} then yields the following:
\begin{corollary}
\label{cor:martbound}
For all $z,v > 0$ we have
\begin{align}
% \prob[\\begin{align}
% ]
\prob\big[Z_i \ge z, \sum_{j=1}^i P_j \le v \text{ for some } i\big] &\le 
\exp{\!\Big[\!-\frac{z^2}{2(v+z)}\Big]}
%= \exp{\!\Big[\!-\frac{\epsilon^2 k_{\epsilon,\delta}}{2}\Big]}
\end{align}
\end{corollary}
Recall now \sumest.
Note that  $\sum_{j=1}^i P_j$ and $Z_i$ are respectively the value of $\frac{w}{\gamma}$ and of $r-\frac{w}{\gamma}$ just after the \emph{while} loop has been executed for the $(i+1)$-th time.
Note also that, when \sumest\ returns, $r=\kerr$.
Therefore the event that, when \sumest\ returns, $\frac{w}{r} \le \gamma(1-\epsilon)$ i.e.\ $\frac{w}{\gamma} \le r (1-\epsilon) \le (1-\epsilon)\kerr$ corresponds to the event that $Z_i \ge \epsilon r = \epsilon \kerr$ and $\sum_{j=1}^i P_j \le (1-\epsilon)\kerr$.
Invoking Lemma~\ref{cor:martbound} with $z=\epsilon\kerr$ and $v=(1-\epsilon)\kerr$:
\begin{align}
\prob\big[\frac{w}{r} \le \gamma(1-\epsilon)\big]  &\le \exp{\!\Big[\!-\frac{\epsilon^2 k_{\epsilon,\delta}^2}{2(\epsilon k_{\epsilon,\delta} + (1-\epsilon)k_{\epsilon,\delta})}\Big]}
= \exp{\!\Big[\!-\frac{\epsilon^2 k_{\epsilon,\delta}}{2}\Big]}
\end{align}
which is smaller than $\delta/3$ since clearly $\kerr > \frac{2}{\epsilon^2}\ln{\frac{3}{\delta}}$.
Consider instead the event that, when \sumest\ returns, $\frac{w}{r} \ge \gamma(1+\epsilon)$ i.e.\ $\frac{w}{\gamma} \ge r (1+\epsilon) = \kerr(1+\epsilon)$.
This is the event that $Z_i \le -\epsilon\kerr$, or equivalently $-Z_i \ge \epsilon\kerr$.
Note that Lemma~\ref{cor:martbound} still holds if we replace $Z_i$ with $-Z_i$, as $(-Z_i)_{i \ge 0}$ too is obviously a martingale with respect to the filter $(\mathcal{F}_i)_{i\ge 0}$, with $-Z_0=0$. Let then $i_0 \le i$ be the smallest time such that $-Z_{i_0} \ge \epsilon k_{\epsilon,\delta}$.
Since $|Z_j - Z_{j-1}| \le 1$, it must be $-Z_{i_0} < \epsilon k_{\epsilon,\delta}+1$.
Also, since $\sum_{j=0}^i \chi_j$ is nondecreasing with $i$, then $\sum_{j=0}^{i_0} \chi_j \le k_{\epsilon,\delta}$.
It follows that $\sum_{j=1}^{i_0} P_j = -Z_{i_0} + \sum_{j=0}^{i_0} \chi_j \le \epsilon k_{\epsilon,\delta} + 1 + k_{\epsilon,\delta} = (1+\epsilon)k_{\epsilon,\delta}+1$.
Invoking again Lemma~\ref{cor:martbound} with $z=\epsilon k_{\epsilon,\delta}$ and $v=(1+\epsilon)k_{\epsilon,\delta}+1$, we obtain:
\begin{align}
\prob\big[\frac{w}{r} \ge \gamma(1+\epsilon)\big] 
&\le \exp{\!\Big[\!-\frac{\epsilon^2 k_{\epsilon,\delta}^2}{2((1+2\epsilon)k_{\epsilon,\delta}+1)}\Big]}
\end{align}
Note that $\frac{1}{\kerr} < \frac{\epsilon^2}{2+4.4\epsilon} < 0.2\epsilon$ since $\epsilon \le 1$; so $2((1+2\epsilon)+\frac{1}{\kerr}) < 2+4.4\epsilon$, and since $\kerr \ge \frac{2+4.4\epsilon}{\epsilon^2}\ln{\!\frac{3}{\delta}}$ the right-hand term is at most $\frac{\delta}{3}$.
Finally, by a simple union bound the probability that $|\hat{\gamma} - \gamma| \ge \epsilon \gamma$  is at most $2\frac{\delta}{3}$, and the proof of Theorem~\ref{thm:sumest_approx} is complete.
\end{proof}
\begin{theorem}
\label{thm:sumest_cost}
\sumest($\epsilon,\delta$) draws at most
$\lceil 45 \|\pi\|^{-1} \epsilon^{-3}(\ln{\frac{3}{\delta}})^{3/2} \rceil$ samples with probability at least $1-\frac{\delta}{3}$.
\end{theorem}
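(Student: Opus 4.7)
The plan is to apply the same martingale machinery as in the proof of Theorem~\ref{thm:sumest_approx}, but ``in reverse'': there we bounded $|Z_t|$ at the algorithm's stopping time to obtain the approximation guarantee; here we want to bound the stopping time itself. Concretely, the algorithm draws more than $T^* := \lceil 45\,\|\pi\|^{-1}\epsilon^{-3}(\ln(3/\delta))^{3/2}\rceil$ samples iff, after $T^*$ i.i.d.\ draws, the number of repeats $R_{T^*} := \sum_{j=0}^{T^*-1}\chi_j$ is still below $\kerr$, so the goal is to show $\prob[R_{T^*} < \kerr] \le \delta/3$. The martingale identity $R_{T^*} = \sum_{j=1}^{T^*-1}P_j + Z_{T^*-1}$ splits this into a Freedman-type lower-tail bound on $Z$ plus a high-probability lower bound on the predictable process $\sum_j P_j$.

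For the Freedman part, I would introduce the stopping time $\sigma := \min\{t : \sum_{j=1}^{t}P_j \ge (1+\epsilon)\kerr\}$. On the event $\{\sigma \le T^*-1\}\cap\{R_{T^*}<\kerr\}$, monotonicity of $R_t$ in $t$ forces $R_{\sigma+1} \le R_{T^*} < \kerr$, hence $-Z_\sigma = \sum_{j=1}^{\sigma}P_j - R_{\sigma+1} > \epsilon\kerr$; the quadratic variation meanwhile obeys $V_\sigma \le \sum_{j=1}^{\sigma}P_j \le (1+\epsilon)\kerr+1$ since the increments of $\sum P_j$ are bounded by $1$. Theorem~\ref{thm:alon} applied to the martingale $-Z$ with $z=\epsilon\kerr$, $M=1$, $v=(1+\epsilon)\kerr+1$ then yields $\exp\!\bigl(-\epsilon^2\kerr^2/[2((1+2\epsilon)\kerr+1)]\bigr)$, and the choice $\kerr \ge (2+4.4\epsilon)\epsilon^{-2}\ln(3/\delta)$ drives this below $\delta/3$ via \emph{the same} computation already carried out for the upper-tail bound in the proof of Theorem~\ref{thm:sumest_approx}.

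It remains to control the complementary event $\{\sigma > T^*-1\} = \{\sum_{j=1}^{T^*-1}P_j < (1+\epsilon)\kerr\}$. Since $P_j = \pi(S_j)$ is non-decreasing in $j$ (as $S_j \subseteq S_{j+1}$), I would bound $\sum_{j=1}^{T^*-1}P_j \ge \lfloor T^*/2\rfloor \cdot P_{\lfloor T^*/2\rfloor}$ and thereby reduce the task to showing $\prob\bigl[P_{\lfloor T^*/2\rfloor} < 2(1+\epsilon)\kerr/T^*\bigr] \le \delta/3$. Writing $P_t = \sum_u \pi(u)\,\mathbbm{1}[u \in S_t]$ as a weighted sum of negatively-associated indicators, the pointwise bound $1-(1-\pi(u))^t \ge \frac{1}{2}\min(t\pi(u),1)$ combined with $\|\pi\|^2/\|\pi\|_\infty \ge \|\pi\|$ gives $\E[P_t] \gtrsim \|\pi\|$ once $t \gtrsim 1/\|\pi\|$, while the target $2(1+\epsilon)\kerr/T^*$ is smaller than $\|\pi\|$ by a factor $\Theta\bigl(\sqrt{\ln(3/\delta)}/\epsilon\bigr)$. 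Bernstein's inequality for the NA weighted sum -- using the tight variance estimate $Var(P_t) \le \sum_u\pi(u)^2(1-\pi(u))^t \le 1/(et)$ that follows from $x^2 e^{-x} \le x/e$ -- closes the gap, and this is exactly where the extra polylogarithmic slack in $T^*$ is spent.

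The main obstacle is this last step: making the lower-tail bound on $P_{\lfloor T^*/2\rfloor}$ uniform across all shapes of $\pi$. When $\pi$ is nearly uniform, $P_t$ grows quadratically like $t\|\pi\|^2$ and the argument is essentially birthday-style concentration; but when $\pi$ has a dominant heavy element, $P_t$ jumps to $\Theta(\|\pi\|_\infty)$ as soon as that element is first sampled, so the hitting-time tail is what really governs the bound. I expect a case split on whether the heavy set $\{u : \pi(u) \ge c/T^*\}$ carries most of the $\ell^2$-mass of $\pi$ to be unavoidable, and the $\epsilon^{-3}(\ln(3/\delta))^{3/2}$ factor built into $T^*$ to be exactly the slack needed for one Bernstein (or NA Chernoff) bound to survive either regime.
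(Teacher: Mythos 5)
Your reduction to bounding $\prob[R_{T^*} < \kerr]$, the identity $R_{T^*} = \sum_{j=1}^{T^*-1} P_j + Z_{T^*-1}$, and the Freedman half of the argument are sound; the stopping-time device with $\sigma$ correctly recycles the upper-tail computation from the proof of Theorem~\ref{thm:sumest_approx}. The genuine gap is the other half: the lower-tail bound on $P_{\lfloor T^*/2\rfloor}$ that must hold uniformly over all shapes of $\pi$, which is where essentially all the content of the theorem lives and which you do not actually establish. Your single Bernstein bound with variance $O(1/t)$ fails in exactly the regime you flag: its denominator contains the term $M\lambda$ with $M=\|\pi\|_\infty$, and when one element carries a constant fraction of $\|\pi\|$ the exponent degenerates to $O(1)$. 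The remedy you sketch --- splitting on whether the heavy set at threshold $c/T^* = \Theta(\epsilon^3\|\pi\|(\ln\frac{3}{\delta})^{-3/2})$ carries most of the $\ell^2$-mass --- does not close this either: that threshold is too low, so the ``heavy'' case still admits distributions whose heavy elements have weights ranging up to $\|\pi\|_\infty\approx\|\pi\|$, each hit only with constant probability, and the normalized concentration exponent is again $O(1)$. The split that works (and is the one the paper uses) is on whether \emph{any single} element exceeds $\bar{p} = \Theta(\epsilon\|\pi\|(\ln\frac{3}{\delta})^{-1/2})$: if yes, that element's own repeat count is a binomial with mean $s\bar{p} = \Omega(\epsilon^{-2}\ln\frac{3}{\delta}) \ge 1.7\kerr$ and a plain Chernoff bound finishes with no mass-concentration step at all; if no, every weight in the non-positively correlated weighted sum is at most $\bar{p}$, so after normalizing by $\bar{p}$ the mean is $\Omega(\bar{p}^{-2}\|\pi\|^2)=\Omega(\epsilon^{-2}\ln\frac{3}{\delta})$ for \emph{every} $\pi$ and Lemma~\ref{lem:chernoff} gives a large exponent. (The paper also places the split point at $\bar{s}\approx\bar{p}^{-1}\ll T^*/2$ and counts second-half collisions with a conditional binomial Chernoff rather than Freedman; that part is only a stylistic difference from your plan.)

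A secondary but real issue is the failure probability. You charge $\delta/3$ to the Freedman event and $\delta/3$ to the predictable-part event, totalling $2\delta/3$, while the theorem claims $1-\delta/3$; and the Freedman term cannot be pushed below $\delta/3$ with the given $\kerr$, since it is precisely the bound that is already tight in the accuracy proof. You would either have to prove the predictable-part bound with failure probability $o(\delta)$ and settle for a weaker constant in this theorem (which then breaks the $\frac{2\delta}{3}+\frac{\delta}{3}=\delta$ union bound in Theorem~\ref{thm:sumapprox}), or argue that your Freedman event coincides with the over-estimation event already charged in Theorem~\ref{thm:sumest_approx} and avoid double-counting it globally. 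As written, the statement as given does not follow from your argument.
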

\begin{proof}
\label{apx:proof_sumest_cost}
\newcommand{\myb}{c}
\newcommand{\mybb}{\frac{5}{18}}
\newcommand{\mys}{s}
\newcommand{\myss}{\bar{s}}
\newcommand{\mySS}{\bar{S}}
We show that the probability that $\mys = \lceil 45 \|\pi\|^{-1} \epsilon^{-3}(\ln{\frac{3}{\delta}})^{3/2} \rceil$ draws yield less than $\kerr$ repeats is less than $\frac{\delta}{3}$.
Let $\bar{p} = \mybb\|\pi\| \epsilon (\ln{\frac{3}{\delta}})^{-1/2}$.
We consider two cases.
\\\textbf{Case 1}: $\exists u \in V$ with $\pi(u) > \bar{p}$.
Let then $C_u^s$ be the random variable counting the number of times $u$ appears in $\mys$ draws.
Since if $C_u^s > \kerr$ then $u$ causes at least $\kerr$ repeats, the probability that \sumest\ needs more than $s$ draws is upper bounded by $\prob[C_u^s \le \kerr]$.
Now $\E[C_u^s] = \mys \pi(u) > \mys \bar{p} >  45\mybb \frac{1}{\epsilon^2}\ln{\frac{3}{\delta}} = \frac{12.5}{\epsilon^2}\ln{\frac{3}{\delta}} \ge 1.7 (\frac{6.4}{\epsilon^2}\ln{\frac{3}{\delta}} +1) \ge 1.7 \lceil \frac{2 + 4.4\epsilon}{\epsilon^2} \ln{\frac{3}{\delta}} \rceil = 1.7 \kerr$, therefore
$C_u^s \le \kerr$ implies $C_u^s < \frac{1}{1.7}\E[C_u^s] < (1 - 0.41)\E[C_u^s]$.
Since $C_u^s$ is a sum of independent binary random variables, the bounds of Appendix~\ref{apx:chernoff_bounds} give $\prob[C_u^s \le \kerr] < \exp{\!\big(\!-\!\frac{1}{2} 0.41^2 \E[C_u^s] \big)} < \exp{\!\big(\!-0.5 \cdot 0.41^2 \cdot \frac{12.5}{\epsilon^2} \ln\frac{3}{\delta} \big)} < \exp{\!\big(\!-1.05 \ln\frac{3}{\delta} \big)} < \frac{\delta}{3}$.
\\\textbf{Case 2}: $\pi(u) \le \bar{p}$ for all $u \in V$.
Let then $\myss = \lceil \bar{p}^{-1} \rceil$, let $\mySS$ be the set of distinct elements in the first $\myss$ draws, and let $w(\myss) = \sum_{u \in \mySS} \pi(u)$.
First we show that $\E[w(\myss)] \ge \frac{4}{9}\myss\|\pi\|^2$.
Write $\E[w(\myss)] = \sum_{u \in V} \pi(u)(1-(1-\pi(u))^{\myss})$.
Since for all $x \in [0,1]$ and $k \ge 1$ it holds $(1-x)^k \le (1+kx)^{-1}$, by setting $x=\pi(u)$ and $k=\myss$ we obtain $1-(1-\pi(u))^{\myss} \ge 1-(1+\myss\pi(u))^{-1} = \myss\pi(u)(1+\myss\pi(u))^{-1}$.
Moreover note that $\bar{p}^{-1} \ge \frac{18}{5} = 3.6$ and thus $\lceil\bar{p}^{-1}\rceil \le \frac{5}{4}\bar{p}^{-1}$. Therefore $\myss \pi(u) \le \lceil\bar{p}^{-1}\rceil\bar{p} \le \frac{5}{4}$ for all $u$, and thus $\myss\pi(u)(1+\myss\pi(u))^{-1} \ge \myss\pi(u)\frac{1}{1+\frac{5}{4}} = \frac{4}{9}\myss\pi(u)$.
Therefore $\E[w(\myss)] \ge \frac{4}{9} \myss \sum_{u \in V} \pi(u)^2 = \frac{4}{9}\myss\|\pi\|^2$.
Now we consider two cases.
First, suppose the event $w(\myss) \ge 0.4\,\E[w(\myss)]$ takes place.
For $i=\myss+1,\ldots,\mys$ let $\chi_i$ be the indicator random variable of the event that the $i$-th draw is an element of $\mySS$, and let $C_s = \sum_{i=\myss+1}^{\mys}\chi_i$.
Clearly \sumest\ witnesses at least $C_s$ repeats in the last $\mys - \myss$ draws, and thus overall.
We shall then bound $\prob[C_s < \kerr]$.
First, since by hypothesis the total mass of $\mySS$ is $w(\myss) \ge 0.4\, \E[w(\myss)]$, we also have $\E[\chi_i] \ge 0.4\, \E[w(\myss)] \ge \frac{1.6}{9} \myss \|\pi\|^2$.
Therefore $\E[C_s] =  \sum_{i=\myss+1}^{\mys}\E[\chi_i] \ge \frac{1.6}{9}(\mys-\myss) \myss \|\pi\|^2$.
Now note that $\mys-\myss > 10 \myss$, therefore $\E[C_s] \ge \frac{16}{9} \myss^2 \|\pi\|^2$.
Finally, since $\myss = \lceil \bar{p}^{-1} \rceil \ge \frac{18}{5} \|\pi\|^{-1} \epsilon^{-1} (\ln{\frac{3}{\delta}})^{1/2}$, it holds $\E[C_s] \ge (\frac{18}{5})^2\frac{16}{9}\frac{1}{\epsilon^{2}}\ln{\frac{3}{\delta}} > 23 \frac{1}{\epsilon^{2}}\ln{\frac{3}{\delta}} > 3.14\kerr$.
It follows that the event $C_s < \kerr$ implies $C_s < \frac{1}{3.14}\E[C_s] < (1-0.68)\E[C_s]$.
By the concentration bounds of Appendix~\ref{apx:chernoff_bounds}, the probability of the latter is $\prob[C_s < \kerr] \le \exp{\!\big(\!-\!\frac{1}{2}\,0.68^2\,\E[C_s] \big)} < \exp{\!\big(\!-\frac{1}{2} \, 0.68^2 \, 23 \frac{1}{\epsilon^{2}}\ln{\frac{3}{\delta}}\big)} < \exp{\!\big(\!-\!5\ln{\frac{3}{\delta}} \big)}  < \frac{\delta}{243}$.
The second case corresponds to the event $w(\myss) < 0.4 \, \E[w(\myss)] = (1-0.6)\E[w(\myss)]$, of which we shall bound the probability.
Let $\chi_u^{\myss}$ be the indicator variable of the event $u \in \mySS$, so $w(\myss) = \sum_{u \in V} \chi_u^{\myss} \, \pi(u)$.
Since $\pi(u) \le \bar{p}$ for all $u$, we can write $w(\myss) = \bar{p} \sum_{u \in V} \chi_u^{\myss} \,\,\bar{p}^{-1}\pi(u)$ so that the coefficients $\bar{p}^{-1}\pi(u)$ are in $[0,1]$.
Clearly, the $\chi_u^{\myss}$ are non-positively correlated.
We can thus apply the bounds of Appendix~\ref{apx:chernoff_bounds} and get $\prob[w(\myss) < 0.4\, \E[w(\myss)]] \le \exp{\!\big(\!- 0.5 \cdot 0.6^2 \, \bar{p}^{-1}\E[w(\myss)] \big)}$.
By replacing the definitions and bounds for $E[w(\myss)]$, $\myss$ and $\bar{p}^{-1}$ from above, we get $\prob[w(\myss) < 0.4\, \E[w(\myss)]] < \exp{\!\big(\!- 2.88 \ln{(\frac{3}{\delta})} \big)} < \frac{\delta}{23}$.
Again by a union bound, the probability that \sumest\ draws more than $\mys$ samples is less than $\frac{\delta}{243} + \frac{\delta}{23} < \frac{\delta}{3}$.
\end{proof}
We remark that the previous existing algorithm for the sum estimation problem~\cite{Motwani&2007} needs knowledge of $n = |V|$ and uses $O(\sqrt{n} \epsilon^{-7/2} \log(n)(\log{\frac{1}{\delta}} + \log{\frac{1}{\epsilon}} + \log\log{n}))$ samples.
\sumest\ is simpler, oblivious to $n$, and gives more general bounds.
It is also asymptotically faster unless $\pi$ is (essentially) the uniform distribution.

Finally, we show that \sumest\ is essentially optimal, by proving $\Omega(\|\pi\|^{-1})$ samples are in general necessary to estimate $\gamma$ even if $n$ is known in advance.
This extends to arbitrary distributions the $\Omega(\sqrt{n})$ lower bound given by~\cite{Motwani&2007} for the uniform distribution.
\begin{restatable}{theorem}{lbsumest}
\label{thm:lbsumest}
For any function $\nu(n) \in \Omega(n^{-\frac{1}{2}}) \cap O(1)$ there exist vectors $\mathbf{x} = \gamma \pi = (\gamma_1,\ldots,\gamma_n)$ with $\|\pi\| = \Theta(\nu(n))$ such that $\Omega(\|\pi\|^{-1})$ samples are necessary to estimate $\gamma$ within constant multiplicative factors with constant probability, even if $n$ is known.
\end{restatable}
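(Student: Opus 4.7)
My approach is to exhibit two hard families of instances with the prescribed $\|\pi\|$ whose sums $\gamma$ differ by a constant factor, and to argue that no algorithm can tell them apart using $o(\|\pi\|^{-1})$ samples. Fix $m = m(n) = \Theta(\nu(n)^{-2})$, with constants chosen so that $2m \le n$; the assumptions $\nu \in \Omega(n^{-1/2}) \cap O(1)$ ensure $m \in [\,\Omega(1), O(n)\,]$. Instance $I_1$ is obtained by choosing a uniformly random subset $V_1 \subseteq \{1,\dots,n\}$ of size $2m$ and setting $\gamma_u = 1$ for $u \in V_1$ and $\gamma_u = 0$ otherwise; instance $I_2$ is defined identically but with $|V_2| = m$. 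In both $I_1$ and $I_2$ the distribution $\pi$ is uniform on its support, so $\|\pi\| = \Theta(1/\sqrt{m}) = \Theta(\nu(n))$, while $\gamma_1 = 2m = 2\gamma_2$, a factor-two gap.

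The core step I would tackle next is an indistinguishability argument: for $t = o(\sqrt{m})$, the joint distribution of what the algorithm observes during its first $t$ samples is the same in $I_1$ and $I_2$ up to total variation distance $o(1)$. Since every nonzero entry equals $1$, each sample conveys only a label $u \in V$. By the symmetry of choosing $V_i$ uniformly at random, conditional on the event $E$ that the first $t$ samples contain no repeat, the observed labels form an ordered $t$-tuple of distinct elements of $V$ drawn uniformly at random, irrespective of whether the support has size $m$ or $2m$. A birthday-style union bound gives $\prob[\bar E] \le \binom{t}{2}/m = o(1)$, so the observation laws under $I_1$ and $I_2$ are both within $o(1)$ of this common uniform distribution, and hence within $o(1)$ of each other.

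Finally, I would invoke the data-processing inequality: the output law of any algorithm that consumes only these $t$ samples also differs by TV distance $o(1)$ across the two instances. Since $\gamma_1 = 2\gamma_2$, an algorithm that achieves a multiplicative $(\sqrt{2} - \eta)$-approximation with probability $\ge 2/3$ on both would have to assign probability $\ge 2/3 - o(1)$ to disjoint output intervals on each instance, a contradiction. Hence any constant-factor estimator must use $\Omega(\sqrt{m}) = \Omega(\|\pi\|^{-1})$ samples, which is what the theorem claims.

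I expect the main technical obstacle to be the symmetry/no-collision claim: one should verify, e.g.\ by lazy exposure of $V_i$ as samples arrive, that conditional on no collision the observation distribution really is the uniform law on ordered $t$-tuples of distinct elements of $V$, independent of $|V_i|$. Everything else (the birthday union bound, the reduction from estimation to distinguishing via the factor-two gap, and the data-processing step) is essentially routine once that symmetry is in place.
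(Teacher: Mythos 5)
Your proposal is correct and takes essentially the same route as the paper's own proof: two instances whose unit-weight supports have sizes $\Theta(\nu(n)^{-2})$ and twice that, symmetrized by a uniformly random placement of the support, with a birthday-style argument showing that $o(\|\pi\|^{-1})$ samples produce no collision and hence an observation law that is $o(1)$-close in total variation under both instances. The only immaterial differences are that the paper pads the remaining entries with tiny positive values $\sqrt{k}/n$ rather than zeros (costing it one extra line to argue such entries are never sampled), and that both you and the paper relegate the extension from a factor-$2$ gap to arbitrary constant approximation factors to a closing remark (take the larger support of size $\eta k$ for $\eta$ large enough).
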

\begin{proof}
\label{apx:sumlb}
Let $k \in \Theta(\nu(n)^{-2})$ with $1 \le k \le \frac{n}{2}$.
Consider the two vectors $\mathbf{x} = (\gamma_1,\ldots,\gamma_n)$ and $\mathbf{x}'= (\gamma_1',\ldots,\gamma_n')$ defined as follows:
\begin{align*}
&\gamma_j = 1 : j\le k, \qquad \gamma_j=\sqrt{k}/n : j>k\\
&\gamma_j' = 1 : j\le 2k \qquad \gamma_j'=\sqrt{k}/n : j>2k
\end{align*}
Now let $\gamma = \sum_{i=1}^n \gamma_i$ and $\gamma' = \sum_{i=1}^n \gamma_i'$.
One can check that $\gamma \le 2k$ and $|\gamma - \gamma'| \ge \frac{k}{2}$.
Hence, to obtain an estimate $\hat{\gamma}$ of $\gamma$ with $\hat{\gamma} \le \frac{5}{4}\gamma$, one must distinguish $\mathbf{x}$ from $\mathbf{x}'$.
Note that the norm of $\pi = \mathbf{x} / \gamma$ is in $\Theta(1/\sqrt{k}) = \Theta(\nu(n))$, as requested.
Now, for each one of $\mathbf{x}$ and $\mathbf{x}'$ in turn, pick a permutation of $\{1,\ldots,n\}$ uniformly at random and apply it to the entries of the vector.
Suppose then we sample $o(\|\pi\|^{-1}) = o(\sqrt{k})$ entries from $\mathbf{x}$.
We shall see that, with probability $1-o(1)$, we cannot distinguish $\mathbf{x}$ from $\mathbf{x}'$.
First, note that the total mass of the entries with value $\sqrt{k}/n$ is at most $1/\sqrt{k}$.
Hence the probability of drawing \emph{any} of those entries with $o(\sqrt{k})$ samples is $o(1)$, and we can assume all draws yield entries having value $1$.
Since there are $O(k)$ such entries in total, the probability of witnessing any repeat is also $o(1)$, and we can assume no repeat is witnessed.
Furthermore, because of the random permutation, the indices of samples are distributed uniformly over $\{1,\ldots,n\}$ (recall that we actually sample from the index set $\{1,\ldots,n\}$, so we could use the distribution of the indices to distinguish $\mathbf{x}$ from $\mathbf{x}'$).
The same argument applies to $\mathbf{x}'$, so drawing $o(\sqrt{k})$ samples from $\mathbf{x}'$ yields exactly the same distribution and the two vectors are indistinguishable.
To adapt the construction to larger approximation factors, set $\gamma_j' = 1 : j\le \eta k$ for $\eta$ large enough.
\end{proof}

\section{Approximating the stationary distribution}
\label{sec:massapprox}
In this section we address the problem of approximating $\pi(v)$.
Such a problem can in fact be reduced to the sum estimation problem of Section~\ref{sec:sumapprox} by drawing states via random walks.
The crux is determining how long the walks must be in order for the samples to come from a distribution close enough to $\pi$, so that the approximation guarantees of \sumest\ transfer directly to our estimate of $\pi(v)$.

Consider a random walk of length $t+1$ that starts at $v$.
Obviously we can simulate such a walk by setting $u_0 = v$ and then invoking \step($u_i$) to obtain the state $u_{i+1}$, for $i=0,\ldots,t-1$.
Crucially, using the time-reversibility of the chain, for any visited state $u$ we can obtain the ratio $\gamma_u$ between $\pi(u)$ and $\pi(v)$ using $O(1)$ operations.
Formally, let $\gamma_v = \pi(v)/\pi(v) = 1$, and in general let $\gamma_u = \pi(u)/\pi(v)$.
Note that:
\begin{align}
\label{eqn:computegamma}
\gamma_{u_{i+1}} = \frac{\pi(u_{i+1})}{\pi(v)} = \frac{\pi(u_{i+1})}{\pi(u_{i})} \cdot \frac{\pi(u_{i})}{\pi(v)}
= \frac{\pi(u_{i+1})}{\pi(u_{i})} \cdot \gamma_{u_i}
\end{align}
The time-reversibility of the chain (see Equation~\ref{eqn:detbalance}) implies $\frac{\pi(u_{i+1})}{\pi(u_{i})} = \frac{p_{u_{i},u_{i+1}}}{p_{u_{i+1},u_{i}}} = \frac{\probe(u_{i},u_{i+1})}{\probe(u_{i+1},u_{i})}$, hence we can compute $\gamma_{u_{i+1}}$ with $O(1)$ operations if we know $\gamma_{u_i}$.
But then we can keep track of $\gamma_u$ for any $u$ visited so far, starting with $\gamma_{u_0} = 1$ and computing $\gamma_{u_{i+1}}$ by Equation~\ref{eqn:computegamma} the first time $u_{i+1}$ is visited.

Suppose now to pick $t$ large enough so that the chain reaches its stationary distribution, i.e.\ $u_t \sim \pi$ irrespective of $v$.
One is then drawing state $u$, as well as its associate weight $\gamma_{u}$, with probability $\pi(u)$.
Now if we let $\gamma = \sum_{u \in V} \gamma_u$, then $\pi(u) = \gamma_u \gamma^{-1}$ and in particular $\pi(v) = \gamma^{-1}$.
Therefore approximating $\pi(v)$ amounts to approximating $\gamma$; more formally, for any $\epsilon \in (0,1)$, if $\hat{\gamma}$ is a $(1\pm\frac{\epsilon}{2})$-approximation of $\gamma$ then $\hat{\gamma}^{-1}$ is a $(1\pm \epsilon)$-approximation of $\pi(v)$.
We can therefore reduce to the sum approximation problem of Section~\ref{sec:massapprox}: compute with probability $(1-\delta)$ a $(1\pm\epsilon)$-approximation of $\gamma$, assuming we can draw pairs $(u,\gamma_u)$ according to $\pi$.
The only problem is that by simulating the chain we can only come close to (but not exactly on) the stationary distribution $\pi$.
We must then tie the approximation guarantees of \sumest\ to the length $t$ of the random walks, or better to the distance $\tvd{\pi'}{\pi}$ between $\pi$ and the distribution $\pi'$ from which $u_t$ is drawn.
Formally, we show:
\begin{lemma}
\label{lem:adapt}
There exists some constant $c > 0$ such that the following holds.
Choose any ${\delta,\epsilon \in (0,1)}$, and suppose we draw the pairs $(u,\gamma_u)$ from a distribution $\pi'$ such that $\tvd{\pi}{\pi'} \le \big(\frac{\epsilon \|\pi\|}{\ln(3/\delta)}\big)^c$.
Then \sumest($\frac{\epsilon}{2},\delta$) with probability at least $1-\delta$ returns a multiplicative $(1\pm \epsilon)$-approximation of $\gamma$ by taking at most $\lceil 720 \|\pi\|^{-1} \epsilon^{-3}(\ln{\frac{3}{\delta}})^{3/2} \rceil$ samples.
\end{lemma}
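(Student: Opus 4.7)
The plan is to adapt the proofs of Theorems~\ref{thm:sumest_approx} and~\ref{thm:sumest_cost} to samples drawn from the perturbed distribution $\pi'$, and then choose $c$ large enough to absorb the perturbation. For the cost bound, observe that the proof of Theorem~\ref{thm:sumest_cost} is agnostic to which distribution is being sampled: it merely analyses how many draws suffice to see $\kerr$ collisions. Applying that proof with $\pi'$ in place of $\pi$ yields that \sumest$(\epsilon/2,\delta)$ run on $\pi'$-samples terminates within $\lceil 360\|\pi'\|^{-1}\epsilon^{-3}(\ln\tfrac{3}{\delta})^{3/2}\rceil$ draws with probability at least $1-\delta/3$. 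Since $\big|\|\pi'\|-\|\pi\|\big|\le \|\pi'-\pi\|_2\le\|\pi'-\pi\|_1 = 2\tvd{\pi}{\pi'}$, imposing $\tvd{\pi}{\pi'}\le\|\pi\|/4$ forces $\|\pi'\|\ge\|\pi\|/2$, so the sample budget becomes $\lceil 720\|\pi\|^{-1}\epsilon^{-3}(\ln\tfrac{3}{\delta})^{3/2}\rceil$ as claimed.

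For the accuracy bound I would rerun the martingale argument of Theorem~\ref{thm:sumest_approx} with $P_j^{\pi'}:=\sum_{u\in S_{j-1}}\pi'(u)$; this is still a martingale with respect to the $\pi'$-filtration, and Corollary~\ref{cor:martbound} gives $\big|\sum_j P_j^{\pi'}-r\big|\le(\epsilon/2)r$ with probability $\ge 1-2\delta/3$. The key observation is that the output $w/r$ equals $\gamma\cdot\sum_j P_j^{\pi}/r$ with $P_j^{\pi}:=\sum_{u\in S_{j-1}}\pi(u)$, because $w$ accumulates the \emph{true} weights $\gamma_u=\gamma\pi(u)$ regardless of how $u$ was drawn. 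The bridge from $\sum_j P_j^{\pi'}$ to $\sum_j P_j^{\pi}$ uses $|\pi(u)-\pi'(u)|\le\tvd{\pi}{\pi'}$ applied termwise, giving $\big|\sum_j P_j^\pi-\sum_j P_j^{\pi'}\big|\le s\cdot\tvd{\pi}{\pi'}$. Plugging in $\kerr=\Theta(\epsilon^{-2}\ln(1/\delta))$ and $s=\tilde O(\|\pi\|^{-1}\epsilon^{-3})$ from the cost analysis, a short calculation shows that this extra additive error is absorbed---turning the $(\epsilon/2)$-guarantee on $\sum_j P_j^{\pi'}$ into an $\epsilon$-guarantee on $w/r$ relative to $\gamma$---provided $\tvd{\pi}{\pi'}\lesssim\|\pi\|\epsilon^2/\sqrt{\ln(3/\delta)}$.

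It remains to verify that a single constant $c$ makes the hypothesis $\tvd{\pi}{\pi'}\le(\epsilon\|\pi\|/\ln\tfrac{3}{\delta})^c$ simultaneously imply (i) $\tvd{\pi}{\pi'}\le\|\pi\|/4$ for the cost bound, and (ii) $\tvd{\pi}{\pi'}\lesssim\|\pi\|\epsilon^2/\sqrt{\ln(3/\delta)}$ for the accuracy bound. Since $\epsilon,\|\pi\|\le 1$ and $\ln(3/\delta)>\ln 3>1$, both conditions reduce to inequalities of the form $(\ln(3/\delta))^{c-O(1)}\ge\mathrm{const}$, which hold uniformly for $\delta\in(0,1)$ once $c$ is a sufficiently large numerical constant; a union bound over the two failure events ($\delta/3$ for cost, $2\delta/3$ for accuracy) then yields total failure probability $\le\delta$. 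The main nuisance I expect is pinning down a concrete $c$: because $\ln(3/\delta)$ can be as small as $\ln 3\approx 1.1$ when $\delta\to 1^-$, $c$ must be chosen non-trivially large (roughly on the order of $\log_{\ln 3}(\mathrm{const})$) for the accuracy inequality to survive this worst regime; everything else is just constant-chasing around the already-proved theorems.
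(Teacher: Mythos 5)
Your proposal is correct and follows essentially the same route as the paper's own proof: the cost bound via the observation that Theorem~\ref{thm:sumest_cost} depends only on the sampling distribution together with the norm perturbation $\|\pi'\|\ge\|\pi\|/2$, and the accuracy bound via re-running the martingale argument with the $\pi'$-probabilities $P_j'$ and controlling the drift termwise by $|P_j-P_j'|\le\tvd{\pi}{\pi'}$. The paper carries out the constant-chasing you defer, arriving at $c\ge 15$ for the cost condition and $c\ge \frac{1}{2}+\frac{\ln 1442}{\ln\ln 3}\approx 78$ for the accuracy condition, exactly because of the $\ln(3/\delta)\to\ln 3$ worst regime you identify.
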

\begin{proof}
Let us start with the bound on the number of samples.
Recall the proof of Theorem~\ref{thm:sumest_cost}, and note that the whole argument depends on $\pi$ but not on the values $\gamma_u$.
Indeed, $\pi$ alone determines the probability of repeats and thus controls the distribution of the number of samples drawn by \sumest.
Hence, by Theorem~\ref{thm:sumest_cost} \sumest($\frac{\epsilon}{2},\delta$) takes more than $\lceil 45 \|\pi'\|^{-1} 8 \epsilon^{-3}(\ln{\frac{3}{\delta}})^{3/2} \rceil = \lceil 360 \|\pi'\|^{-1} \epsilon^{-3}(\ln{\frac{3}{\delta}})^{3/2} \rceil$ samples with probability less than $\frac{\delta}{3}$.
Now $\|\pi - \pi'\| \le 2\tvd{\pi}{\pi'} \le 2\big(\frac{\epsilon \|\pi\|}{\ln(3/\delta)}\big)^c \le \|\pi\| 2(\ln{3})^{-c}$, which for $c \ge 15$ is bounded by $\frac{1}{2} \|\pi\|$.
Then, since $\|\pi'\| \ge \|\pi\| - \|\pi - \pi'\|$, we have $\|\pi'\|^{-1} \le 2 \|\pi\|^{-1}$ and the bound above is in turn bounded by $\lceil 720 \|\pi\|^{-1} \epsilon^{-3}(\ln{\frac{3}{\delta}})^{3/2} \rceil$.

Let us now see the approximation guarantees.
Recall the proof of Theorem~\ref{thm:sumest_approx}.
We want to show again that $\Pr[|\frac{w(s)}{r} - \gamma| \ge \frac{\epsilon}{2}\gamma] \le \frac{2\delta}{3}$.
However, now the samples are drawn according to $\pi'$ instead of $\pi$.
Let then $P_j'=\sum_{u\in \cup_{h=0}^{j-1} \{X_h\}}\! \pi'(u)$ and $Z_i'=\sum_{j=0}^i (\chi_j - P_{j}')$; in a nutshell, $P_j'$ and $Z_i'$ are the analogous of $P_j$ and $Z_i$ under $\pi'$.
It is immediate to check that Lemma~\ref{cor:martbound} holds with $Z_i'$ and $P_j'$ in place of $Z_i$ and $P_j$.
Let now $w'(i) = \gamma \sum_{j=1}^i P_{j}'$.
Note that $\sum_{j=1}^i P_{j}'$ and $Z_i'$ are respectively the value of $\frac{w'(i)}{\gamma}$ and of $r - \frac{w'(i)}{\gamma}$ just after line 9 has been executed for the $(i+1)$-th time.
Therefore, the argument following Lemma~\ref{cor:martbound} holds if we put $w'(i)$ in place of the value taken by $w$ after the $(i+1)$-th execution of line 9.
Hence the same bounds hold, and \sumest($\frac{\epsilon}{2},\delta$) ensures $\Pr[|\frac{w'(s)}{r} - \gamma| \ge \frac{\epsilon}{2}\gamma] \le \frac{\delta}{3}$ where $s$ is the total number of draws.
Now note that \sumest\ does not return $\frac{w'(s)}{r}$, but $\frac{w(s)}{r}$ where $w(i) = \gamma \sum_{j=1}^i P_{j}$ is the value of $w$ in \sumest\ after line 9 has been executed for the $(i+1)$-th time.
We shall now make $|\frac{w(s)}{r} - \frac{w'(s)}{r}| \le \frac{\epsilon}{2}\gamma$; by the triangle inequality we will then be done.
First of all, by the definition of $w(s)$ and $w'(s)$ we have
\begin{equation}
\label{eqn:diffw}
\Big|\frac{w(s)}{r} - \frac{w'(s)}{r}\Big| = \gamma r^{-1} \Big|\sum_{j=1}^s P_{j} - \sum_{j=1}^s P_j'\Big| \le \gamma r^{-1} \sum_{j=1}^s \big|P_{j} - P_j'\big|
\end{equation}
Now note that $|P_j - P_j'| \le \tvd{\pi}{\pi'}$, since $P_j$ and $P_j'$ are the probability of the same event under respectively $\pi$ and $\pi'$.
Therefore the right-hand side of Equation~\ref{eqn:diffw} is bounded by $\gamma r^{-1} s \, \tvd{\pi}{\pi'}$.
Now, when \sumest($\frac{\epsilon}{2},\delta$) terminates $r = k_{\frac{\epsilon}{2},\delta} \ge 4\frac{2 + 2.2\epsilon}{\epsilon^2}\ln\frac{3}{\delta}$, and by hypothesis $\tvd{\pi}{\pi'} \le \big(\frac{\epsilon \|\pi\|}{\ln(3/\delta)}\big)^c$.
Therefore:
\begin{equation}
\Big|\frac{w(s)}{r} - \frac{w'(s)}{r}\Big|
\le \gamma s \, \frac{\epsilon^2}{4(2 + 2.2\epsilon)\ln(\frac{3}{\delta})} \Big(\frac{\epsilon \|\pi\|}{\ln(\frac{3}{\delta})}\Big)^c
\le \gamma \, s \, \|\pi\| \, \frac{\epsilon^{3+c}}{8\ln(\frac{3}{\delta})^{1+c}} 
\end{equation}
Finally, recall from above that with probability $1-\frac{3}{\delta}$ we have $s \le \lceil 720 \|\pi\|^{-1} \epsilon^{-3}(\ln{\frac{3}{\delta}})^{3/2} \rceil$.
In this case the equation above yields $|\frac{w(s)}{r} - \frac{w'(s)}{r}| \le \gamma \cdot 721 \epsilon^c \ln(\frac{3}{\delta})^{0.5-c} $, which is smaller than $\frac{\epsilon}{2}\gamma$ for $c \ge \frac{1}{2}+ \frac{\ln 1442}{\ln \ln 3} \approx 78$.

A simple union bound completes the proof.
\end{proof}
We are now ready to prove Theorem~\ref{thm:ub_taupi}.
Pick $t = \tau \, c \ln{\!(\|\pi\|^{-1}\epsilon^{-1} \ln{\frac{3}{\delta}})} / \ln 2$, where $c$ is the constant of Lemma~\ref{lem:adapt} and $\tau$ is the mixing time of the chain.
Simulate the walk for $t$ steps starting from $v$, and let $\pi'$ be the distribution of the final state.
By the properties of the mixing time (see Section~\ref{sub:notation}):
\begin{align}
\tvd{\pi}{\pi'} \le 2^{-c \ln{\!(\|\pi\|^{-1}\epsilon^{-1} \ln{\frac{3}{\delta}})} / \ln 2} \le \Big(\frac{\epsilon \|\pi\|}{\ln(3/\delta)}\Big)^c
\end{align}
and therefore by Lemma~\ref{lem:adapt} we obtain a $(1\pm\epsilon)$ approximation of $\gamma$.
By choosing $\epsilon$ small enough we can obtain a $(1\pm\epsilon')$ approximation of $\pi(v)$ for any desired $\epsilon'$.
The total number of operations performed is clearly bounded by $t = \tau \, c \ln{\!(\|\pi\|^{-1}\epsilon^{-1} \ln{\frac{3}{\delta}})} / \ln 2$ times the number of samples taken by \sumest, and by substituting this value in the bound of Theorem~\ref{thm:sumapprox} we obtain Theorem~\ref{thm:ub_taupi}.
The pseudocode of the resulting algorithm, \taupiest, is given for reference in Appendix~\ref{apx:taupiest}.

\subsection{Reducing the footprint}
\label{sub:fma}
In this section we describe \taunest, the algorithm behind the bounds of Theorem~\ref{thm:lb_taun}.
\taunest\ is derived from \taupiest\ as follows.
First, instead of performing a new walk of length $t$ from $v$ for each sample, the algorithm performs one long random walk of length $T$ and takes one sample every $t$ steps.
The correctness guarantees do not change, since although the samples do not come all from the same distribution, they are still drawn from a distribution sufficiently close to $\pi$.
Second, after checking if the current draw yields a repeat, the algorithm includes in the set $S$ not only the draw but also all other states visited so far.
Again, this does not affect the guarantees, since we do not need the set $S$ to be built on independent samples.
However, this makes the mass of $S$ grow potentially faster, so we can hope to get more repeats and decrease the total number of samples.
The pseudocode of \taunest\ is in Appendix~\ref{apx:fma}.

\textbf{The concentration hypothesis.}
Before continuing to the proof of Theorem~\ref{thm:lb_taun}, let us provide some intuition behind the concentration hypothesis.
Suppose the walk runs for $T = k \bar{\tau}$ steps for some $\bar{\tau} = \tau \poly(\log(\|\pi\|^{-1}))$.
Such a process can be seen as a coupon collector over $k$ rounds, where a subset of at most $\bar{\tau}$ states is collected (i.e.\ visited) at each round.
Now, if we pick $\bar{\tau}' \le \bar{\tau}$ with $\bar{\tau}' = \tau \poly(\log(\|\pi\|^{-1}))$, then in each round the $\bar{\tau} - \bar{\tau}'$ central steps are essentially independent of other rounds (more formally, the correlation is $O(\poly(n)^{-1})$).
Each round is then in large part independent of the others; the issue is that the states visited \textit{within} a single round are correlated.
Such a correlation is responsible for the factor $\bar{\tau}$ in the concentration hypothesis and amounts for the (intuitive) fact that conditioning on the outcome of one step of the walk does not affect the distribution of those steps that are more than $\bar{\tau}$ steps away.
We note that the concentration bounds of~\cite{Chung&2012} give $\prob[\sum_{i=1}^T f_i \notin (1 \pm \bar{\epsilon})\E[\sum_{i=1}^T f_i]] < 2\exp{\!-\Omega\big(\bar{\epsilon}^2 \E[\sum_{i=1}^T f_i] / \tau\big)}$ where $f_i \in [0,1]$ is a function of state $X_i$; however we could not use them to prove the concentration hypothesis of Theorem~\ref{thm:ub_taun}.

Let us now delve into the proof.

\begin{proof}
Observe the random walk performed by \taunest.
Clearly if $\bar{\tau} = \Omega(n)$ then the walk visits $O(\tau\ln n  + \sqrt{\bar{\tau}n})$ distinct states, and the theorem holds unconditionally.
Let us then assume $\bar{\tau} = o(n)$.
We disregard the first $T_0 = \Theta(\tau\ln n)$ steps of the walk, which of course yield at most $T_0$ distinct states, and focus on the last $T$ steps, which we denote by $X_1,\ldots,X_T$ (one may thus plug $T+T_0$ in place of $T$ in the concentration hypothesis).
Let $\pi_i$ denote the distribution of state $X_i$, $i=1,\ldots,T$.
Since $T_0 = \Theta(\tau\ln n)$, then we can make $\tvd{\pi_i}{\pi} \le \frac{1}{\poly(n)}$.
One can adapt the proof of Lemma~\ref{lem:adapt} to \taunest, using the hypothesis $\tvd{\pi}{\pi_i} \le \big(\frac{\epsilon \|\pi\|}{\ln(3/\delta)}\big)^c$ for all $i \ge 1$.
This changes the bounds of the lemma only by constant multiplicative factors.
We can thus focus on proving the bound on the number of states visited by the walk.
In the analysis we assume $X_i \sim \pi$, but again the same asymptotic bounds hold if $\tvd{\pi_i}{\pi} \le \frac{1}{\poly(n)}$.
Let $S_{v,t} = \cup_{i=1}^t \{X_i\}$, let $N_{v,t} = |S_{v,t}|$, and let $M_{v,t} = \sum_{u \in S_{v,t}} \pi(u)$.
For brevity we simply write $S_t, N_t, M_t$.

The crux is to show that $M_t$, the aggregate mass of $S_t$, grows basically as $N_t^2/t$.
Formally we prove that, for any $\epsilon,\delta, q > 0$, if $\prob[N_t \ge q] \ge 1-\delta$ then $\prob[M_t \ge q^2\frac{\epsilon}{4tn}] \ge 1-\epsilon - \delta$.
First, for any $\lambda > 0$ let $V_{\lambda} = \{u \in V : \pi(u) < \frac{\lambda}{n} \}$.
Clearly $\prob[X_i \in V_{\lambda}] = \sum_{u \in V_{\lambda}} \pi(u) < \lambda$.
Therefore the number of steps $J_t(\lambda)$ the chain was on a state of $V_{\lambda}$ satisfies $\E[J_t(\lambda)]< t \lambda$.
Now, by Markov's inequality $\prob[J_t(\lambda) > \frac{q}{2}] < \frac{2 t \lambda}{q}$, and setting $\lambda=\epsilon \frac{q}{2} t$ we obtain $\prob[J_t(\lambda) > \frac{q}{2}] < \epsilon$.
Since by hypothesis $\prob[N_t < q] < \delta$, by a union bound we get $\Pr[N_t \ge q, \, J_t(\lambda) < \frac{q}{2}] \ge 1 - \delta - \epsilon$.
But if $N_t \ge q$ and $J_t(\lambda) < \frac{q}{2}$ then $S_t$ contains at least $\frac{q}{2}$ distinct states with individual mass at least $\frac{\epsilon q}{2tn}$, and thus $M_t \ge \frac{q}{2} \frac{\epsilon q}{2tn} = q^2\frac{\epsilon}{4tn}$.

Now choose $t$ such that $\E[N_{t}] = \Omega(\sqrt{n \bar{\tau}})$; note that $\E[N_{t}] = \Omega(\bar{\tau})$ since $\bar{\tau} = o(n)$.
By plugging $\E[N_{t}]$ into the concentration bound for $N_t$ we can then make $\prob[N_{t} < (1-\bar{\epsilon})\E[N_{t}]]$ arbitrarily small for any $\bar{\epsilon} > 0$.
Let then $q = (1-\bar{\epsilon})\E[N_{t}]$.
By the bounds of the previous paragraph, for any $\delta > 0$ with probability $1-\delta-\bar{\epsilon}$ we have $M_t \ge q^2\frac{\bar{\epsilon}}{4tn} = \Omega(n \bar{\tau}) \frac{\bar{\epsilon}}{4tn} = \Omega(\frac{\bar{\tau}}{t})$.
Conditioned on the event that $M_{t} = \Omega(\frac{\bar{\tau}}{t})$, any sample drawn after $t$ steps is a repeat with probability $\Omega(\frac{\bar{\tau}}{t})$.
If we then draw $\Theta(\frac{t}{\bar{\tau}})$ samples, which require $\Theta(t)$ steps, we witness an expected $\Omega(1)$ samples, which can be made larger than $\kerr$ by appropriately increasing $t$.
Again by the concentration bounds on $N_t$, the total number of states visited can be made $O(2\E[Q_t]) = O(\sqrt{n \bar{\tau}}) = \tilde{O}(\sqrt{n \tau})$ with probability arbitrarily close to $1$ by appropriately increasing $t$.
\end{proof}

\section{Lower bounds}
\label{sec:lb}
In this section we prove the bounds of Theorem~\ref{thm:lb_taupi} and Theorem~\ref{thm:lb_taun} (see Section~\ref{sub:results}).
Both proofs follow the same line.
As anticipated, the bounds are proven under a strengthened model providing a primitive \neigh{$u$} that at cost $O(1)$ returns all the incoming and outgoing transition probabilities of $u$.
We assume \neigh{u} is invoked automatically when $u$ is first visited, and we leave for free all subsequent \step() and \probe() calls on $u$ and all elementary operations.
It is clear that the cost incurred under this model is no larger than that incurred in the \step() and \probe() model.
We see the chains as random walks on weighted undirected graphs.
Recall that any undirected weighted graph $G$ can be univocally associated to a time-reversible Markov chain: for any $u,u' \in G$, $p_{uu'} > 0$ if and only if $(u,u')$ is an edge of $G$ with weight $w_{uu'} = z \pi(u)p_{uu'}$, for some constant $z >0$ equal for all edges.

Consider a random $d$-regular expander graph $G_0$ on $n_0$ nodes.
By standard results, the simple random walk on $G_0$ has mixing time $\tau_0 = \Theta(\log_d{n_0})$.
Also, by standard birthday arguments, starting from any given node $v$ any algorithm must visit $\Omega(\sqrt{n_0})$ nodes and thus perform $\Omega(\sqrt{n_0})$ queries to estimate $n_0$ within constant factors with constant probability.
We now build our chain out of $G_0$.
In particular, we create a (random) graph $G=G(\Delta,n_0)$ on $n = \Theta(n_0 \Delta)$ nodes as follows.
Take each arc $\{u,v\}$ of $G_0$ and replace it with a star on $\Delta+1$ nodes.
More precisely: delete $\{u,v\}$, add a new node $s_{uv}$ and two arcs $\{u,s_{uv}\}$ and $\{v,s_{uv}\}$, and add $\Delta-2$ nodes each having a self-loop and an arc to $s_{uv}$.
$\Delta$ is a function of $n_0$ to be decided later.
Assign weight $d-1$ to each self-loop, and weight $1$ to any other arc.
Let $n$ be the number of nodes in $G$; clearly $n = \Theta(n_0 \Delta)$.
$G$ is now a version of $G_0$ where the random walk slows down by a factor roughly $d\Delta = \Theta(\Delta)$, since moving between two nodes $u$ and $u'$ that were neighbors in $G_0$ now takes $\Theta(\Delta)$ steps in expectation.
The mixing time of $G$ is therefore $\tau = \Theta(\tau_0 \Delta)$.
This holds also for an arbitrary algorithm: any two neighbors of a star center are indistinguishable until they are visited, since their transition probabilities are the same ($1/\Delta$ from the center, and $1/d$ to the center).
The same holds for the neighbors of the original nodes of $G_0$, whose transition probabilities from/to such a node are $1/d$ and $1/\Delta$ respectively.
Therefore, starting from any node in $G$ any algorithm needs $\Omega(\sqrt{\Delta n})$ queries to estimate $n$.
Finally, if $\pi$ is the stationary distribution of the random walk, then $\|\pi\|= \Theta(\sqrt{\Delta/n})$, since there are $\Delta/n$ nodes (the centers of the stars) all having the same mass which is also asymptotically larger than the mass of any other node, and which in aggregate is $\Omega(1)$.

Now consider the graph $G'=(\Delta,\frac{n_0}{2})$, which has half the nodes of $G$.
Choose any node $u \in G'$.
We now add $k$ nodes to $G'$, with $k = \Theta(n)$, so that it has exactly the same number of nodes $n$ as $G$.
Finally, we add $k$ arcs between each of those nodes and $u$, each one of weight $\frac{\epsilon}{k}$ for some $\epsilon > 0$.
Both the overall mass of these $k$ nodes, and the probability of walking to any of them from $u$, is then less than $\epsilon$.
Therefore, the mixing time of $G'$ is essentially unaltered, as well as its stationary distribution.
However, any node in $G$ has roughly half the mass of its ``homologue'' in $G'$.
Therefore, to estimate the mass of any given node in $G$, one must distinguish between $G$ and $G'$, i.e.\ determine whether the graph at hand comes from $G(\Delta,n_0)$ or $G(\Delta,\frac{n_0}{2})$; and as we have seen this requires $\Omega(\sqrt{\Delta n})$ queries.

Now to the bounds. Note that
$\tau\|\pi\|^{-1} = \Theta\big(\tau_0 \Delta \sqrt{n/\Delta}\big) = O(\sqrt{\Delta n} \ln{n})$,
and
$\sqrt{\tau n} = \Theta\big(\sqrt{\Delta \tau_0 n}\big) = O\big(\sqrt{\Delta n}\sqrt{\ln{n}}\big)$;
thus $\sqrt{\Delta n}$ is in both $\Omega(\tau\|\pi\|^{-1} / \ln{n})$ and $\Omega(\sqrt{\tau n / \ln{n}})$.
Since $\tau =\Theta(\Delta\tau_0) = \Theta(\frac{n}{n_0}\ln{n_0})$ and $\Delta = \Theta(\frac{n}{n_0})$, by appropriately choosing $n_0 \in \Theta(1) \cap \Theta(n)$ we can make $\tau$ range from $\Theta(\ln{n})$ to $\Theta(n)$.
In the same way, since $\|\pi\| = \Theta(\sqrt{\Delta/n}) = \Theta(\sqrt{1/n_0})$ we can make $\|\pi\|$ range from $\Theta(\frac{1}{\sqrt{n}})$ to $\Theta(1)$ (although not independently of $\tau$).

\section{Conclusions}
\label{sec:conc}
We have given improved, optimal algorithms for approximating the stationary probability of a given state in a time-reversible Markov chain, and for approximating the sum of nonnegative real vectors by weighted sampling.
Although time-reversible chains are of clear relevance, extending our results to other classes of Markov chains is an intriguing open question.
We have also shown that the footprint of our algorithms in terms of number of distinct states visited is tied to the concentration of the number of distinct states visited by the chain; investigating such a concentration is thus an obvious line of future research.

\clearpage

\bibliographystyle{plain}
\bibliography{biblio-arxiv}

\clearpage

\section{Appendix}
\label{sec:apx}

\subsection{Probability bounds}
\label{apx:chernoff_bounds}
This appendix provides Chernoff-type probability bounds that are repeatedly used in our analysis; these bounds can be found in e.g.~\cite{Auger&2011}, and can be derived from~\cite{Panconesi&1997}.

Let $X_1,\ldots,X_n$ be binary random variables. We say that $X_1,\ldots,X_n$ are non-positively correlated if for all $I \subseteq \{1,\ldots,n\}$ we have:
\begin{align}
Pr[\forall i \in I: X_i=0] &\leq \prod_{i \in I} Pr[X_i=0] \\
Pr[\forall i \in I: X_i=1] &\leq \prod_{i \in I} Pr[X_i=1]
\end{align}
The following lemma holds:
\begin{lemma}
\label{lem:chernoff}
Let $X_1,\ldots,X_n$ be independent or, more generally, non-positively correlated binary random variables. Let $a_1,\ldots,a_n \in [0,1]$ and $X=\sum_{i=1}^{n}a_i X_i$. Then, for any $\epsilon > 0$, we have:
\begin{align}
Pr[X < (1-\epsilon)\E[X]] &< e^{-\frac{\epsilon^2}{2}\E[X]} \\
Pr[X > (1+\epsilon)\E[X]] &< e^{-\frac{\epsilon^2}{2+\epsilon}\E[X]} 
\end{align}
\end{lemma}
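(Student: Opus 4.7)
The plan is to apply the standard exponential moment (Chernoff) method, adapted to accommodate both the non-positive correlation assumption and the weights $a_i \in [0,1]$. The whole argument reduces to establishing the moment generating function bound $\E[e^{tX}] \le \exp\bigl(\mu(e^t-1)\bigr)$, where $\mu = \E[X]$, for $t$ of the appropriate sign, and then optimizing $t$ exactly as in the independent case.

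The first key step is the factorization
\[
\E[e^{tX}] = \E\Bigl[\prod_i e^{t a_i X_i}\Bigr] \le \prod_i \E[e^{t a_i X_i}].
\]
Since each $X_i$ is binary, $e^{t a_i X_i} = 1 + (e^{t a_i}-1) X_i$. Expanding the product over $i$ yields a sum, over subsets $I \subseteq \{1,\ldots,n\}$, of the terms $\prod_{i \in I}(e^{t a_i}-1)\cdot \Pr[\forall i \in I: X_i = 1]$. For $t>0$ (used for the upper tail) every coefficient $\prod_{i \in I}(e^{t a_i}-1)$ is non-negative, so the hypothesis $\Pr[\forall i \in I: X_i=1] \le \prod_{i\in I}\Pr[X_i=1]$ gives the desired inequality; for the lower tail one rewrites $e^{t a_i X_i} = e^{t a_i} + (1-e^{t a_i})(1-X_i)$ and uses instead the inequality on $\Pr[\forall i \in I: X_i=0]$.

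The second step is routine: convexity of $x \mapsto e^{tx}$ on $[0,1]$ gives $e^{t a_i} \le 1 + a_i(e^t-1)$, hence
\[
\E[e^{t a_i X_i}] \le 1 + a_i p_i (e^t-1) \le \exp\bigl(a_i p_i (e^t-1)\bigr),
\]
where $p_i = \Pr[X_i = 1]$ and the last inequality uses $1+y \le e^y$. Multiplying over $i$ and recalling $\mu = \sum_i a_i p_i$ yields the target bound $\E[e^{tX}] \le \exp(\mu(e^t-1))$. Then Markov's inequality $\Pr[X \ge (1+\epsilon)\mu] \le e^{-t(1+\epsilon)\mu}\E[e^{tX}]$ with $t = \ln(1+\epsilon)$ gives the classical expression $\bigl(e^{\epsilon}/(1+\epsilon)^{1+\epsilon}\bigr)^{\mu}$, which is weakened via $(1+\epsilon)\ln(1+\epsilon) \ge \epsilon + \epsilon^2/(2+\epsilon)$ to $e^{-\epsilon^2 \mu/(2+\epsilon)}$; the lower tail bound $e^{-\epsilon^2 \mu /2}$ follows by symmetric manipulations with $t<0$ and the sharper inequality $(1-\epsilon)\ln(1-\epsilon) \ge -\epsilon + \epsilon^2/2$ valid on $[0,1)$.

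The one subtle point, and the only real obstacle, is the factorization step: one must carefully match the polarity of the events appearing in the non-positive correlation hypothesis with the sign of $t$, because the expansion of the product produces coefficients whose sign depends on $t$. Once the correct direction has been chosen for each tail, the remainder of the argument is a verbatim transcription of the classical proof of Chernoff's bound for independent variables, with $a_i \in [0,1]$ absorbed cleanly by the convexity inequality $e^{t a_i}\le 1 + a_i(e^t-1)$.
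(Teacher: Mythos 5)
Your proof is correct. Note that the paper does not actually prove this lemma: it only cites \cite{Auger&2011} and \cite{Panconesi&1997}, so there is no in-paper argument to compare against; your derivation is the standard exponential-moment proof that those references formalize. You correctly isolate and resolve the one genuinely delicate point, namely that the factorization $\E[e^{tX}]\le\prod_i\E[e^{ta_iX_i}]$ requires matching the sign of $t$ to the polarity of the correlation hypothesis: for $t>0$ the expansion of $\prod_i\bigl(1+(e^{ta_i}-1)X_i\bigr)$ has nonnegative subset coefficients and the bound on $\Pr[\forall i\in I: X_i=1]$ applies, while for $t<0$ one must rewrite in terms of $1-X_i$ and use the bound on $\Pr[\forall i\in I: X_i=0]$. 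The remaining steps are all sound: convexity gives $e^{ta_i}\le 1+a_i(e^t-1)$ to absorb the weights, Markov's inequality with $t=\ln(1\pm\epsilon)$ gives the classical $\bigl(e^{\pm\epsilon}/(1\pm\epsilon)^{1\pm\epsilon}\bigr)^{\mu}$ forms, and the elementary inequalities $\ln(1+\epsilon)\ge 2\epsilon/(2+\epsilon)$ and $(1-\epsilon)\ln(1-\epsilon)\ge-\epsilon+\epsilon^2/2$ yield the stated exponents (the case $\epsilon\ge 1$ of the lower tail is vacuous since $X\ge 0$).
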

Note that Lemma~\ref{lem:chernoff} applies if $X_1,\ldots,X_n$ are indicator variables of mutually disjoint events, or if they can be partitioned into independent families $\{X_1,\ldots,X_{i_1}\}$, $\{X_{i_1+1},\ldots,X_{i_2}\}$, \ldots of such variables. 

\subsection{A lower bound for non-time-reversible Markov chains}
\label{sub:general_impossible}
\begin{lemma}
For any functions $\tau(n) = \omega(1)$ and $p(n) = o(\frac{1}{n})$ there exists a family of ergodic non-time-reversible Markov chains on $n$ states having mixing time $\tau = \Theta(\tau(n))$, and containing a state $v$ with $\pi(v) = \Theta(p(n))$ such that any algorithm needs $\Omega(\frac{\tau}{\pi(v)})$ calls to \step() to estimate $\pi(v)$ within constant multiplicative factors with constant probability.
\end{lemma}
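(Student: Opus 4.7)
The plan is to exhibit a family of non-time-reversible chains realizing the prescribed $\tau(n)$ and $p(n)$, and to prove the lower bound by an information-theoretic distinguishability argument. Since the algorithm is restricted to \step() calls, its only source of information about the chain is a sequence of simulated transitions, so I will bound the KL divergence between the trajectory laws of two nearby chains in the family.

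\emph{Step 1: construction.} For given $\tau = \tau(n)$ and $p = p(n)$, I will build a chain on $n$ states with one distinguished state $v$, a short ``sticky'' structure at $v$ whose relaxation time is $\Theta(\tau)$, and a one-way escape route from $v$ through a directed path back to an auxiliary state, so that visits to $v$ occur in correlated bursts governed by a $\mathrm{Geom}(1/\tau)$-distributed dwell time. Non-reversibility is forced because $v$ has many in-edges (from the path) and only one out-neighbour; detailed balance fails on the out-edge of $v$. A flow-balance computation fixes the return probability so that $\pi(v) = \Theta(p(n))$, and the parameters can be tuned so that the dominant relaxation timescale is $\Theta(\tau(n))$.

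\emph{Step 2: two-point family.} Pick $M_1$ with self-loop probability $1 - 1/\tau$ at $v$ and $M_2$ with $1 - 1/(2\tau)$, keeping all other transitions identical. A direct computation of the stationary distribution shows $\pi_{M_2}(v) = 2\,\pi_{M_1}(v)$, so any algorithm that returns a $(1\pm\tfrac{1}{4})$ estimate of $\pi(v)$ with constant probability must distinguish $M_1$ from $M_2$ with constant probability.

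\emph{Step 3: KL bound.} The two transition kernels $P_1, P_2$ differ only at row $v$. Hence for any $T$-step trajectory $(X_0,\dots,X_T)$ with $X_0 = v$, the chain rule gives
\begin{align*}
\mathrm{KL}(P_1^T \,\|\, P_2^T) = \E_{P_1}\!\Big[\sum_{t=0}^{T-1} \mathbbm{1}[X_t=v]\,\mathrm{KL}\big(\mathrm{Bern}(1/\tau) \,\|\, \mathrm{Bern}(1/(2\tau))\big)\Big] = \Theta\!\big(T\,\pi(v)\cdot 1/\tau\big),
\end{align*}
using $\mathrm{KL}(\mathrm{Bern}(q)\|\mathrm{Bern}(q/2)) = \Theta(q)$ for small $q$. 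By Pinsker's inequality, $d_{\textsc{TV}}(P_1^T, P_2^T) \le \sqrt{\mathrm{KL}/2}$, so no algorithm can distinguish $M_1, M_2$ with constant probability unless $\mathrm{KL} = \Omega(1)$, i.e.\ $T = \Omega(\tau/\pi(v))$. Since the algorithm is constrained to a model strictly weaker than a black-box oracle returning full trajectories, Le Cam's two-point method applies and gives the claimed lower bound on the number of \step() calls.

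\emph{Main obstacle.} The most delicate point is ensuring that the mixing time really is $\Theta(\tau(n))$ across the full parameter range $\tau(n) = \omega(1)$, $p(n) = o(1/n)$, without creating additional observable discrepancies between $M_1$ and $M_2$ (which would only speed up distinguishability). In particular, when $\pi(v)\tau \ll 1$ the dwell-time at $v$ no longer dominates relaxation, and the construction must be augmented (e.g.\ by embedding the sticky-$v$ component into a rapidly mixing non-reversible background, carefully chosen so that (i) it is the same in $M_1$ and $M_2$, and (ii) its spectral gap still gives $\tau_{\mathrm{mix}} = \Theta(\tau(n))$). Verifying the mixing time bound in this augmented construction, and checking that the KL computation is unaffected by the auxiliary structure, is where the bulk of the technical work lies.
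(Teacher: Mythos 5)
There is a genuine gap, and it is in the choice of the two-point family rather than in the general strategy. You perturb the transition row of the target state $v$ itself (self-loop $1-1/\tau$ versus $1-1/(2\tau)$), but $v$ is the start state and is visited from the outset, so the two chains are locally distinguishable right at $v$: a single \probe($v,v$) call reveals $p_{vv}$ exactly, and even restricting to \step() calls, one observed dwell time at $v$ is a sample from $\mathrm{Geom}(1/\tau)$ versus $\mathrm{Geom}(1/(2\tau))$, two distributions at constant total variation distance. Hence $M_1$ and $M_2$ can be told apart with constant advantage after $O(\tau)$ \step() calls, which is far below $\tau/\pi(v)$ when $\pi(v)=o(1)$. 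Your KL computation hides exactly this: since $X_0=v$, the expected number of indices $t<T$ with $X_t=v$ is $\Theta(\min(T,\tau))+\Theta(T\pi(v))$, not $\Theta(T\pi(v))$ --- the initial excursion alone contributes $\Theta(\tau)\cdot\Theta(1/\tau)=\Theta(1)$ to the KL divergence, so Pinsker stops giving any nontrivial bound already at $T=O(\tau)$. The identity $\mathrm{KL}=\Theta(T\pi(v)/\tau)$ would only hold for a walk started from (and kept near) stationarity, which is not the situation of the lower bound.

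The fix is to move the perturbation away from anything the algorithm can see cheaply, which is what the paper does. It uses a star-like chain with a hub $u$ ($p_{uu_i}=p(n)/\tau(n)$ for each leaf $u_i$, $p_{u_iu}=1/\tau(n)$), takes $v=u_i$ for some leaf, and obtains the second chain by redirecting a \emph{different, unvisited} leaf $u_j$ entirely into $v$ (set $p_{u_jv}=1$), which doubles $\pi(v)$ while leaving the transition probabilities of every state the algorithm has seen unchanged. Under the locality constraint the two chains induce identical distributions on everything observable until $u_j$ is first visited, and since $u_j$ is reachable only from $u$ with probability $p(n)/\tau(n)$ per step, hitting it requires $\Omega(\tau(n)/p(n))=\Omega(\tau/\pi(v))$ \step() calls in expectation. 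This is a hitting-time (needle-in-a-haystack) argument rather than a trajectory-KL argument; if you want to keep the KL formalism, you would need the differing row to belong to such a rarely visited state, at which point the expected visit count to that state over $T$ steps is $O(Tp(n)/\tau(n))$ and the bound goes through --- but then the elementary hitting argument is already sufficient and simpler. Your concern about controlling the mixing time is legitimate but secondary; the paper's single hub with dwell time asymptotically exceeding that of the leaves handles it directly.
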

\begin{proof}
Consider a chain with state space $\{u\} \cup \{u_1,\ldots,u_{n-1}\}$ and
the following transition probabilities (we assume $n$ large enough to set in $[0,1]$ any quantity where needed).
For $u$, set $p_{uu} = 1-\frac{(n-1)p(n)}{\tau(n)}$, and $p_{uu_i} = \frac{p(n)}{\tau(n)}$ for all $i=1,\ldots,n-1$.
For all $i=1,\ldots,n-1$, set $p_{u_i u_i} = 1-\frac{1}{\tau(n)}$ and $p_{u_i u} = \frac{1}{\tau(n)}$.
The chain is clearly ergodic.
Note that $\frac{(n-1)p(n)}{\tau(n)} = o(\frac{1}{\tau(n)})$ and therefore the expected time to leave $u$ is asymptotically larger than the expected time to leave any of the $u_i$.
One can then check that (i) $\pi(u_i) = \Theta(p(n))$, and (ii) the mixing time is $\tau = \Theta(\tau(n))$ (essentially, the expected time to leave the $u_i$).
Pick any $u_i$ as target state $v$.
Suppose now to alter the chain as follows: pick some $u_j \ne v$ and set $p_{u_j v} = 1$.
The new stationary probability of $v$ would then be roughly $2\pi(v)$.
However one cannot distinguish between the two chains with constant probability with less than
$\Omega(\frac{\tau}{\pi(v)})$ \step() calls.
Indeed, to distinguish between them one must at least visit $u_j$ (and then perform e.g.\ \probe($u_j,v$)).
Since $u$ is the only state leading to $u_j$ with positive probability, one must invoke \step($u$) until it returns $u_j$.
But $p_{uu_j}=\frac{p(n)}{\tau(n)}$, hence one needs $\Omega(\frac{\tau(n)}{p(n)}) = \Omega(\frac{\tau}{\pi(v)})$ calls in expectation.
The construction can be adapted to any constant approximation factor by adding more transitions towards $v$.
\end{proof}

\clearpage
\subsection{Pseudocode of \taupiest}
\label{apx:taupiest}
\renewcommand{\thealgorithm}{}
\begin{algorithm}[h!]
\label{alg:taupiest}
\small
\caption{\taupiest($v, \epsilon, \delta$)}
\begin{algorithmic}[1]
\State $S \leftarrow \emptyset$ \Comment{distinct states visited so far}
\State $w_S\leftarrow 0$ \Comment{$\sum_{u \in S} \gamma_u$ for the current $S$}
\State $w\leftarrow 0$ \Comment{will accumulate $w_S$}
\State $r\leftarrow 0$ \Comment{number of repeats witnessed}
\State $\kerr \leftarrow 4\lceil\frac{2+4.4\epsilon}{\epsilon^2}\ln{\!\frac{3}{\delta}}\rceil$ \Comment{halting threshold on the number of repeats}
\vspace*{0.3em}
\While{$r < \kerr$}
\State $w\leftarrow w + w_S$
\State $(u, \gamma_u) \leftarrow $ sample drawn by walking $t$ steps starting from $v$
\If {$u \in S$} \Comment{detect repeat}
\State $r \leftarrow r+1$ 
\Else
\State $S \leftarrow S \cup \{u\}$
\State $w_S \leftarrow w_S + \gamma_u$
\EndIf
\EndWhile
\State \textbf{return} $r/w$ \Comment{estimate of $1/\gamma$, i.e.\ of $\pi(v)$}
\end{algorithmic}
\end{algorithm}

%\clearpage
\subsection{Pseudocode of \taunest}
\label{apx:fma}
\renewcommand{\thealgorithm}{}
\begin{algorithm}[h]
\small
\caption{\taunest($\epsilon, \delta, v$)}
\begin{algorithmic}[1]
\State $S \leftarrow \{v\}$ \Comment{distinct states visited so far}
\State $D \leftarrow \{v:1\}$ \Comment{dictionary mapping $u$ to $\gamma_u$}
\State $w_S\leftarrow 1$ \Comment{$\sum_{u \in S} \gamma_u$ for the current $S$}
\State $w\leftarrow 0$ \Comment{will accumulate $w_S$}
\State $r\leftarrow 0$ \Comment{number of repeats witnessed}
\State $\kerr \leftarrow 4\lceil\frac{2+4.4\epsilon}{\epsilon^2}\ln{\!\frac{3}{\delta}}\rceil$ \Comment{halting threshold on the number of repeats}
\State $u \leftarrow v$ \Comment{current walk state}
\While{$r < \kerr$}
\State $w\leftarrow w + w_S$
\State $N \leftarrow \emptyset$ \Comment{new states visited}
\For{$i = 1$ to $t$} %\Comment{enlarge $S$ with $\le \tau'$ states}
\State $\bar{u} \leftarrow$ \step($u$)
  \If{$\bar{u} \notin D$} \Comment{$\bar{u}$ never visited before}
  \State $D[\bar{u}] = D[u]\,\cdot\,$\probe($u,\bar{u}$)$ / $\probe($\bar{u},u$)
  \State $N \leftarrow N \cup \{\bar{u}\}$
  \EndIf
\State $u \leftarrow \bar{u}$
\EndFor
\If {$u \in S$} \Comment{detect repeat}
\State $r \leftarrow r+1$ 
\EndIf
\State $w_S \leftarrow w_S + \sum_{u \in N} D[u]$
\State $S \leftarrow S \cup N$ 
\EndWhile
\State \textbf{return} $r/w$ \Comment{estimate of $1/\gamma$, i.e.\ of $\pi(v)$}
\end{algorithmic}
\end{algorithm}

\clearpage
\subsection{Experiments}
\label{sub:exp}
We experimentally evaluate \taupiest\ and \taunest\ against the algorithms of Lee et al.~\cite{Lee&2013} and Banerjee et al.~\cite{Lofgren&2015b} (see Section~\ref{sub:rel}).
All algorithms were ran on synthetic time-reversible Markov chains on $1$M states, created as follows.
We start from an undirected torus graph (i.e.\ a grid with periodic boundary) of $n = 1000 \times 1000$ nodes.
We then add $0.01n$ edges between random pairs of nodes, to reduce the mixing time and thus the cost (and running time) of the algorithms.
We add self-loops to all nodes to ensure ergodicity.
Finally, we weight the arcs according to two distributions:
\begin{itemize}
\item $\pi_{U}$ (uniform): each arc has weight $1$.
The norm is $\|\pi_{U}\| = 0.001$, or essentially $1/\sqrt{n}$.
\item $\pi_{S}$ (skewed): each arc is given an independent weight $1/X$ where $X \sim \mathcal{U}(0,1]$.
The norm is $\|\pi_{S}\| \simeq 0.07$.
\end{itemize}
For each weighted graph, we consider the time-reversible chain of the associated random walk.

We picked $v=0$ as the target node, which is equivalent to any other one (and indeed repeating the experiments on other nodes yielded the same results).
For all algorithms we set $\delta=0.1$.
For the algorithm of Banerjee et al.\ we set the minimum detection threshold at $\epsilon \pi(v)$, and for all other algorithms we set $\epsilon=0.25$.
One must then fix the random walk length: $t$ in our algorithms, $\ell$ in Banerjee et al., and $1/\Delta$ in Lee et al.
Setting the lengths to $\simeq \tau \ln(n)$ would make all algorithms satisfy the desired guarantees.
Since we do not know $\tau$, for each algorithm we proceed as follows.
We initially set the length of random walks to $l=10$.
We then perform three independent executions of the algorithm.
If all three executions return an estimate $\hat{\pi}(v)$ within a multiplicative factor $(1 \pm \epsilon)$ of $\pi(v)$, then we stop.
Otherwise, we increase $l$ by a factor $\sqrt{2}$ and repeat.
For each value of $l$ we record the average relative error $\hat{\epsilon} = \frac{|\hat{\pi}(v) - \pi(v)|}{\pi(v)}$ and the average total number of \step() and \probe() calls.
Figure~\ref{fig:1} shows how $\hat{\epsilon}$ decreases as the number of calls increases.

\begin{figure}[h]
\begin{minipage}{0.49\textwidth}
\centering
\includegraphics[width=\textwidth]{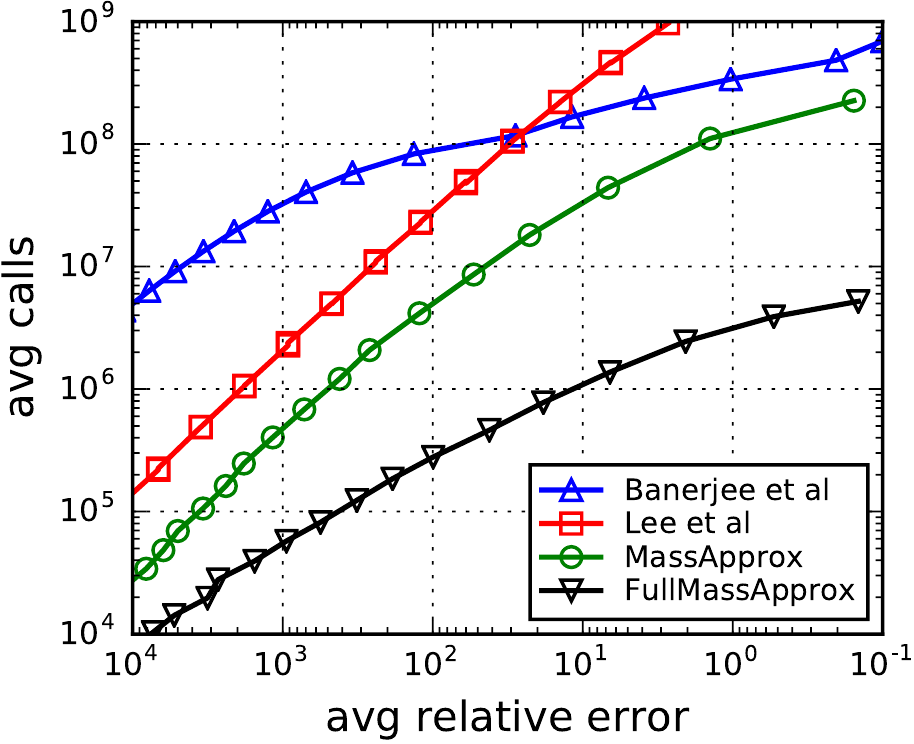}
\end{minipage}
\begin{minipage}{0.02\textwidth}
\hspace{0.01\textwidth}
\end{minipage}
\begin{minipage}{0.49\textwidth}
\centering
\includegraphics[width=\textwidth]{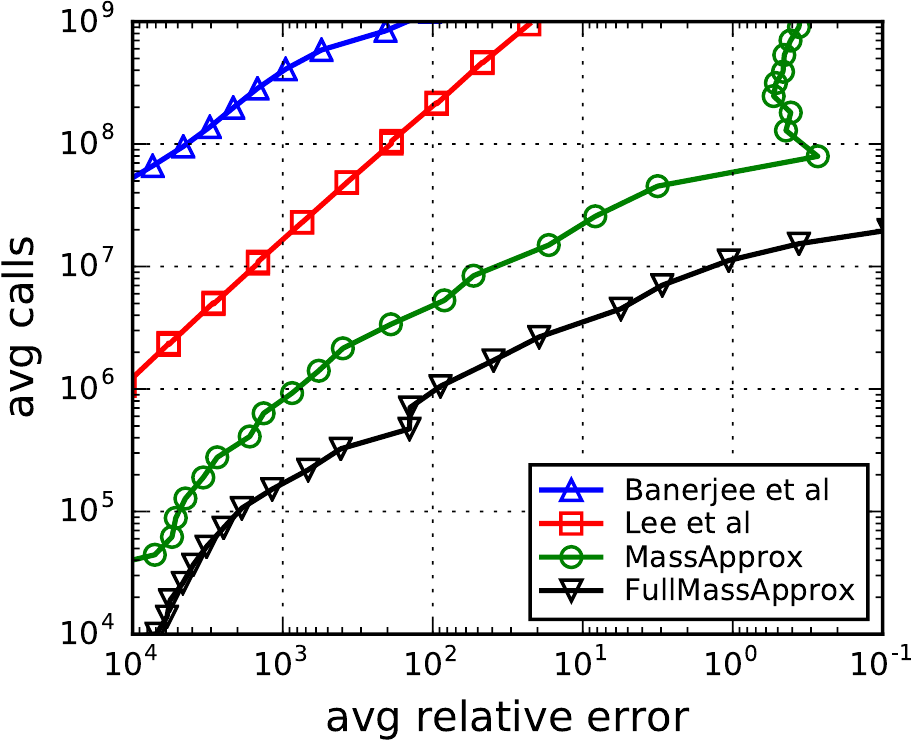}
\end{minipage}
\caption{cost incurred by the algorithms as their estimates converge towards $\pi(v)$. Left: chain with uniform distribution $\pi_{U}$. Right: chain with skewed distribution $\pi_S$.}
\label{fig:1}
\end{figure}

\taupiest\ and \taunest\ are the fastest candidates in all cases.
In the uniform chain, \taupiest\ is approached by the algorithm of Banerjee et al.\ at high accuracies.
This seems a confirmation of theory: \taupiest\ has complexity $\tilde{O}(\tau n^{-0.5})$ on a chain with uniform distribution, and the algorithm of Banerjee et al.\ has complexity $\tilde{O}(\tau^{1.5} n^{-0.5})$ on the ``typical'' target state with mass $\pi(v) \approx 1/n$.
If $\tau$ is not exceedingly large, the two complexities can translate into close performance in practice.
On the other hand, \taunest\ is neatly more efficient than previous algorithms.
To obtain a fairly accurate estimate of $\pi(v)$, say $\pm 50 \%$, it improves on their performance by two orders of magnitude -- and possibly by more on the skewed chain.
These results suggest that our algorithms are not only of theoretical interest, but also of practical value.
A final observation is that \taunest\ outperforms also \taupiest\ on the uniform chain. The complexity bounds we have are the same for both algorithms, but perhaps \taunest\ takes advantage of some specific structural properties of the chain we have used, which makes its complexity drop further.

%\input{exp}

%\clearpage
%\input{sketching}

\end{document}